\documentclass[11pt]{article}

\usepackage{multirow}
\usepackage[margin=1in,letterpaper]{geometry}
\usepackage{enumerate}
\usepackage{makecell}
\usepackage{amsmath,amsthm,amssymb,fancyhdr,mathrsfs,enumerate,latexsym}
\usepackage{graphicx}
\newcommand{\safeincludegraphics}[2][]{%
  \IfFileExists{#2}{\includegraphics[#1]{#2}}{%
    \fbox{\begin{minipage}[c][0.23\textheight][c]{0.95\linewidth}\centering\footnotesize Figure file not found\\\texttt{\detokenize{#2}}\end{minipage}}%
  }%
}
\usepackage{multicol}
\usepackage{longtable}
\usepackage{float}
\usepackage{caption}
\usepackage{rotating}
\usepackage{natbib}
\usepackage{url}
\usepackage[colorlinks=true,linkcolor=blue,citecolor=blue,urlcolor=blue]{hyperref}
\usepackage[section]{placeins}
\usepackage{booktabs}
\usepackage{microtype}

\newcommand{\E}{\mathbb{E}}
\newcommand{\V}{\mathbb{V}ar}
\newcommand{\hthet}{\widehat{\theta}}
\newcommand{\tthet}{\widetilde{\theta}}
\newcommand{\bthet}{\bar{\theta}}

\theoremstyle{plain}
\newtheorem{theorem}{Theorem}[section]

\theoremstyle{definition}
\newtheorem{remark}[theorem]{Remark}

\begin{document}

\begin{center}
\Large{\sc Threshold Exceedance Estimation in Spatially Correlated
Areal Data Using Maxima-Nominated Sampling}
\end{center}

\begin{center}
{Mohammad Jafari Jozani$^1$ }\\[0.5ex]
{\it $^1$ Department of Statistics, University of Manitoba, Winnipeg, MB, Canada, R3T 2N2}
\end{center}

\begin{abstract}
We study estimation of the proportion of areal units in a spatially correlated domain whose success probabilities exceed a prespecified threshold. Such problems arise in health surveillance, environmental monitoring, and social policy, where the goal is to estimate the fraction of high-risk areas. We propose a DUST-MNS design that combines maxima-nominated sampling (MNS) with the probability-proportional-to-size dependent unit sequential technique (pps-DUST), thereby promoting spatial spread while mitigating the effect of spatial autocorrelation. The design forms $n$ candidate sets of size $k$ and obtains final measurements  only from  the area judged to be at highest risk in each set, yielding $n$ measured areas from $nk$ screened candidates. Ranking may be based on expert judgment, prior surveys, or easily obtained auxiliary covariates. We derive a closed-form estimator of the exceedance probability $\theta$ based on  data from DUST-MNS design, establish its bias and variance, and show that, in the rare-to-moderate exceedance regime $\theta<\theta^\star(k)$, the proposed DUST-MNS estimator outperforms its SRS and DUST-SRS counterparts, where $\theta^\star(k)$ depends only on $k$. We also provide guidance on the choice of $k$, derive efficiency bounds under a Beta model, extend the method to imperfect ranking, and develop variance estimation and bootstrap confidence intervals. An application to county-level stroke prevalence data from CDC PLACES, using diabetes prevalence as the ranking concomitant, illustrates the proposed approach.
\end{abstract}

\medskip
\textbf{Keywords:} Exceedance probability; maxima-nominated sampling;
spatial autocorrelation; DUST;  areal data.

%======================================================================
\section{Introduction}\label{sec:intro}
%======================================================================

An important problem in spatial epidemiology, environmental surveillance, and public policy is the
estimation of an exceedance proportion and finding out  the fraction of areal units whose underlying risk exceeds a
prespecified benchmark. In many surveillance settings, this upper-tail functional is more relevant than
an overall average. Public-health agencies often need to determine how many counties, districts, or
other reporting regions lie above a high-burden threshold in order to prioritize geographically targeted
interventions, allocate prevention resources, and monitor spatial disparities. For example, in chronic
disease surveillance, one may wish to estimate the proportion of counties whose stroke prevalence
exceeds a benchmark corresponding to unusually elevated burden. \citep{GreenlundEtAl2022,ChenEtAl2026}. If $p_i$ denotes the stroke prevalence in county $i$ and $c$ is chosen as a high-burden benchmark, such as the empirical 90th percentile of the county-level distribution, then $\theta=\Pr(p_i>c)$ is the proportion of counties whose burden exceeds that benchmark.

This threshold-oriented objective differs fundamentally from mean estimation. When $c$ lies in the
upper tail, simple random sampling (SRS) typically devotes substantial effort to areas far below the
benchmark and therefore only weakly informative about $\theta$. As a result, conventional designs may
be poorly matched to the scientific question. An effective design for exceedance estimation should
concentrate measurement effort on areas likely to exceed the threshold while still preserving adequate
geographical coverage. The latter requirement is especially important in areal surveillance, since nearby
units often have similar risks and therefore contribute overlapping information.

A natural starting point for this problem is nomination sampling.  Introduced by \citet{Willemain1980}, it was developed for settings in which direct random selection is difficult but small subsets of units can be formed and ranked before full measurement is taken. Importantly, ranking need not be based on the study variable itself. In many applications, it can instead be based on expert judgment, proxy measurements, or inexpensive auxiliary covariates that are informative about the outcome of interest. This use of inexpensive concomitant information is closely related to the ranked-set sampling literature; see, for example, \citet{YuLam1997} for regression-assisted use of concomitants and \citet{NahhasWolfeChen2002} for the role of cost structure and set size in efficient ranked-set designs. In the stroke application considered later, counties can be ranked using readily available indicators of elevated stroke burden.  In the county-level stroke setting considered later, for example, diabetes prevalence is a plausible concomitant because counties with higher diabetes burden also tend to have higher stroke burden. Thus, within a small candidate set, one can often identify the county most likely to lie in the upper tail without directly observing stroke prevalence.

In its maxima-nomination form, one constructs candidate sets and fully measures only the largest unit from each set according to the auxiliary ranking. When the ranking is positively associated with the study variable, this design concentrates effort on units most likely to satisfy $p_i>c$, making it especially attractive for estimation of exceedance proportions. Existing work on nomination sampling has focused primarily on mean-, distribution-, and quantile-related functionals. For example, \citet{BoylesSamaniego1986}, \citet{TiwariWells1989}, and \citet{KvamSamaniego1993} studied estimation of distributional functionals under nomination sampling, while \citet{JafariJozaniJohnson2012} and \citet{NourmohammadEtAl2014,NourmohammadEtAl2015,NourmohammadEtAl2020} developed finite-population and inferential extensions. More recently, \citet{LoewenJafariJozani2026} used nomination sampling in robust regression. For binary or threshold-based targets, the most closely related ranked-set result is the estimation of a population proportion under ranked set sampling; see \citet{ChenStasnyWolfe2006}. Our setting differs in that the target is an exceedance proportion in a spatially correlated areal population and the nomination step is embedded within a spatially spreading design.

Despite its clear relevance, nomination sampling has not been formulated as a formal framework for estimating exceedance proportions in spatially correlated areal populations. Spatial dependence adds a further complication because neighboring areas often have positively correlated success probabilities, so naive sampling designs may overrepresent local clusters and thereby inflate variance. In addition, if nomination is used without explicit spatial control, the selected maxima may become concentrated in a restricted part of the study region, weakening spatial representativeness. The design problem therefore has two objectives: to preferentially sample areas likely to exceed the threshold while preserving spatial dispersion.

This paper develops a framework specifically for that problem. We combine maxima-nomination sampling (MNS) with a spatially spreading design by embedding it within the probability-proportional-to-size Dependent Unit Sequential Technique (pps-DUST) of \citet{Arbia1993}. Under pps-DUST, the first area is selected with probability proportional to size, and each subsequent selection probability is modified by a multiplicative penalty based on neighborhood lags from previously selected areas. Areas near those already chosen are therefore downweighted, while more distant areas remain comparatively more likely to be sampled. In this way, the design promotes spatial dispersion and reduces the tendency to oversample neighboring clusters.

The resulting DUST-MNS procedure concentrates full measurement effort on areas most likely to exceed the benchmark while preserving coverage over the study region. On the inferential side, we derive a calibrated estimator that links the exceedance probability of the nominated maxima to the population-level exceedance proportion $\theta$. We establish its bias and
variance, compare its efficiency with SRS and DUST-SRS, and characterize the regime in which the
proposed design is most advantageous. We further extend the framework to imperfect ranking and
develop variance estimation and confidence interval procedures, including bootstrap-based inference for
small or moderate sample sizes.

We illustrate the proposed method using county-level stroke prevalence data from the CDC PLACES
platform, with diabetes prevalence as the ranking concomitant. The application is motivated by the
public-health need to identify and quantify the fraction of counties with elevated stroke burden, rather
than merely summarize national average prevalence. Because PLACES provides a large geographically
indexed population with strong medical and spatial structure, it offers a natural setting in which to
study threshold-oriented areal surveillance.

 The remainder of the paper is organized as follows. Section~\ref{sec:model} describes the proposed DUST-MNS design and its implementation. Section~\ref{sec:theory} develops estimators of the exceedance probability using SRS, DUST-SRS, and DUST-MNS data and presents their basic properties. Section~\ref{sec:effcomp} establishes efficiency comparisons among the competing designs, including explicit results under a Beta model. Section~\ref{sec:imperf} extends the framework to imperfect ranking. Section~\ref{sec:application} presents the county-level stroke application and the Monte Carlo study, and Section~\ref{sec:conclusion} concludes with a brief discussion.

%======================================================================
\section{DUST-MNS design and its implementation}\label{sec:model}

Consider a study domain $\mathscr{D}$ partitioned into $N$ non-overlapping areal units. Let $N_i$ denote the total number of individuals in area $i\in\{1,\ldots,N\}$. In a selected area, suppose that outcomes are observed for $m_i$ individuals, where $1\le m_i\le N_i$, and let $Y_{ij}\in\{0,1\}$ denote the binary outcome for individual $j=1,\ldots,m_i$. Here $Y_{ij}=1$ indicates the event of interest, such as stroke, disease hospitalization, or pollution exceedance. Conditional on the area's latent success probability $p_i$, we assume
\[
Y_{ij}\mid p_i \sim \mathrm{Bernoulli}(p_i),
\]
so that the within-area observed count
\[
X_i=\sum_{j=1}^{m_i}Y_{ij}\mid p_i \sim \mathrm{Binomial}(m_i,p_i).
\]
When the selected area is fully measured, $m_i=N_i$; otherwise, $m_i$ denotes the within-area sample size.

We treat the latent success probabilities $p_1,\ldots,p_N$ as exchangeable random variables with marginal density $f(\cdot)$ on $(0,1)$, mean $p_0=\E(p_i)$, and variance $\sigma_0^2=\V(p_i)$. Spatial dependence is introduced through the correlation structure. For areas $i$ and $j$ separated by neighborhood lag $l_{ij}$, we assume
\begin{equation}\label{eq:autocorr}
\eta_{ij}=\mathrm{Corr}(p_i,p_j)=\eta_0^{\,l_{ij}}, \qquad \eta_0\in[0,1),
\end{equation}
where $\eta_0$ is the first-lag spatial autocorrelation and $l_{ij}\in\{1,2,\ldots\}$ denotes the minimum number of areal boundaries separating areas $i$ and $j$. Moran's coefficient \citep{Moran1950} may be used to estimate $\eta_0$.

Let $c\in(0,1)$ be a prespecified policy threshold. The primary inferential target is
\begin{equation}\label{eq:theta}
\theta=\Pr(p_i>c)=1-F(c),
\end{equation}
where $F(\cdot)$ is the marginal distribution function of $p_i$. In the stroke application, for example, $p_i$ is the stroke prevalence in county $i$, and if $c$ is chosen as a high-burden benchmark, such as the empirical 90th percentile of the county-level distribution, then $\theta$ is the proportion of counties whose stroke prevalence exceeds that benchmark. Because $\theta$ is a tail probability rather than a mean, its estimation calls for a design that preferentially targets the upper tail of the $p_i$ distribution while maintaining adequate spatial coverage.

To construct the proposed DUST-MNS design, we first use the pps-DUST mechanism of \citet{Arbia1993} to select a pool of $nk$ candidate areas that is both probability-proportional-to-size and spatially well dispersed. The aim is to avoid drawing many neighboring areas, since nearby units tend to have similar latent success probabilities and therefore contribute overlapping information. At the same time, the design retains the probability-proportional-to-size principle, so that areas with larger importance measure remain more likely to be sampled. The procedure is sequential. Let $M_i$ denote the size measure attached to area $i$, and let $M=\sum_{i=1}^N M_i$. If $I_j$ denotes the area selected at draw $j$, then
\[
\Pr(I_1=i)=\frac{M_i}{M}, \qquad i=1,\dots,N.
\]
For $j\ge 2$, let $s_{j-1}$ denote the set of areas already selected after the first $j-1$ draws, with complement $\bar s_{j-1}$. Each unselected area $i\in \bar s_{j-1}$ is assigned weight
\begin{equation}\label{eq:dust}
w_i^{(j)} = M_i \prod_{r\in s_{j-1}}\bigl(1-\eta_0^{\,l_{ir}}\bigr),
\end{equation}
and the next area is selected with conditional probability
\[
\Pr(I_j=i\mid s_{j-1})
=
\frac{w_i^{(j)}}{\sum_{h\in \bar s_{j-1}} w_h^{(j)}},
\qquad i\in \bar s_{j-1}.
\]
Repeating this step for $j=1,\ldots,nk$ yields the full candidate pool.

This updating rule has a simple interpretation. The factor $M_i$ preserves the probability-proportional-to-size feature, while the product term applies a multiplicative spatial penalty. If area $i$ is close to one or more previously selected areas, then some of the factors $1-\eta_0^{\,l_{ir}}$ are small, and its weight is reduced accordingly. By contrast, areas that are farther from the current sample receive less penalization and therefore remain more likely to be selected. In this sense, pps-DUST acts as a repulsive sampling device that discourages local clustering and promotes spatial coverage of the study region.

In practice, implementation requires three ingredients: a size measure $M_i$, a neighborhood structure from which the lags $l_{ir}$ are computed, and an estimate of the spatial dependence parameter $\eta_0$. In areal applications, $M_i$ may be the population size $N_i$, the number of households, or another measure of area importance. The lags $l_{ir}$ are obtained from the adjacency graph of the study region, where $l_{ir}=1$ for neighboring areas, $l_{ir}=2$ if one intermediate area separates them, and so forth. The parameter $\eta_0$ may be estimated from the data, for example by a global spatial autocorrelation measure such as Moran's coefficient. Once these ingredients are available, the pps-DUST algorithm is implemented by repeatedly updating the weights in \eqref{eq:dust}, normalizing them, and selecting the next area from the resulting distribution. Thus, before any ranking step is performed, pps-DUST produces a candidate sample of size $nk$ that is deliberately spread over space. This is especially important here, because the subsequent maxima-nomination step enriches the final  sample with areas likely to exceed the threshold, while pps-DUST prevents that enrichment from resulting in excessive spatial concentration.

The DUST-MNS design has two additional parameters: $n$, the final sample size or the number of areal units to be studied,  and $k$, the set size used for ranking. The analyst typically chooses $k$ to be small, often $k=3$, $4$, or $5$, so that within-set ranking remains reliable, and then chooses $n$ to achieve the desired precision. The total design cost is therefore based on $nk$ ranking evaluations together with $n$ full measurements. Given $n$ and $k$, the DUST-MNS design proceeds as follows.
\begin{enumerate}[Step 1.]
\item Using pps-DUST, select $nk$ candidate areas and partition them at random into $n$ sets of size $k$.
\item Within each set $i\in\{1,\ldots,n\}$, rank the $k$ areas in descending order according to an auxiliary variable that is informative about $p_i$.
\item From each set, retain only the area with the highest auxiliary value, namely the maximum nominee, for full measurement. Let $p_{[k:k],i}$ denote its true success probability.
\item In each selected area, observe the binary outcomes and compute the exceedance indicator from the resulting estimated prevalence.
\end{enumerate}

In the county-level stroke application, the auxiliary ranking variable is diabetes prevalence. Thus, within each DUST-selected candidate set, counties are ranked by diabetes prevalence, and only the county with the highest diabetes burden is chosen for full measurement of stroke prevalence. Because diabetes prevalence is strongly associated with stroke prevalence, this ranking step helps target counties most likely to exceed the high-burden threshold while the DUST component preserves spatial spread across the study region.

\begin{remark}[Why the maximum and not another order statistic]\label{rem:whymax}
Under any absolutely continuous distribution $F(\cdot)$ with $F(c)<1$, the maximum order statistic within a set of size $k$ stochastically dominates all lower order statistics; see, for example, \citet[Chapter 2]{DavidNagaraja2003}. Hence
\[
\Pr(p_{k:k}>c)\ge \Pr(p_{i:k}>c), \qquad i<k.
\]
Thus, the maximum nominee provides the largest per-set probability of observing an exceedance and therefore carries the greatest information about $\theta$ through the indicator $\mathbf{1}(p_{[k:k],i}>c)$.
\end{remark}
%======================================================================
\section{SRS- and MNS-based  exceedance probability  estimation} \label{sec:theory} 
%======================================================================

Let
$
R_n=\sum_{i=1}^n \mathbf{1}(p_{[k:k],i}>c)
$
denote the number of sets in which the nominated maximum exceeds $c$. Because the pps-DUST mechanism deliberately spreads candidate areas over the study region, the selected maxima tend to arise from geographically separated sets. We therefore assume that $p_{[k:k],1},\ldots,p_{[k:k],n}$ are approximately independent for the purpose of the following calibration. For a set of size $k$, the exceedance probability of the maximum is
\[
q_k:=\Pr(p_{[k:k]}>c)=1-(1-\theta)^k,
\]
so the natural empirical estimator of $q_k$ is
\[
\hat q_k=\frac{R_n}{n}.
\]
Although $\hat q_k$ is unbiased for $q_k$, in general $q_k\neq \theta$. This suggests estimating $\theta$ by inverting the power relationship.

%\begin{theorem}[Calibrated MNS estimator]\label{thm:unbiased}
%Define
%\begin{equation}\label{eq:thetahat}
%\hthet_{\mathrm{DUST\text{-}MNS}} = 1 - \left(1 - \frac{R_n}{n}\right)^{1/k}.
%\end{equation}
%Then $\hthet_{\mathrm{DUST\text{-}MNS}}=g_k(\hat q_k)$, where $g_k(x)=1-(1-x)^{1/k}$ and $g_k(q_k)=\theta$.
%Thus $\hthet_{\mathrm{DUST\text{-}MNS}}$ is the natural plug-in inverse of $q_k$.
%It is not exactly unbiased because $g_k$ is nonlinear. Its leading bias is
%\begin{equation}\label{eq:bias}
%\mathrm{Bias}(\hthet_{\mathrm{DUST\text{-}MNS}})
%=
%\frac{1}{2}g_k''(q_k)\frac{q_k(1-q_k)}{n}
%+ O(n^{-2})
%=
%\frac{(1/k-1)}{2kn}(1-q_k)^{1/k-2}q_k(1-q_k)+O(n^{-2}).
%\end{equation}
%\end{theorem}
%
%\begin{proof}
%Under the working independence approximation,
%$R_n\sim\mathrm{Binomial}(n,q_k)$, so $\E(\hat q_k)=q_k$ and
%$\V(\hat q_k)=q_k(1-q_k)/n$. A second-order Taylor expansion of $g_k(\hat q_k)$ around $q_k$ gives
%\[
%g_k(\hat q_k)
%=
%g_k(q_k)+g_k'(q_k)(\hat q_k-q_k)+\tfrac12 g_k''(q_k)(\hat q_k-q_k)^2+O_p(n^{-3/2}).
%\]
%Taking expectations and using $g_k(q_k)=\theta$ yields \eqref{eq:bias}.
%\end{proof}

\begin{theorem}\label{thm:unbiased}
 Let
$
R_n=\sum_{i=1}^n \mathbf{1}(p_{[k:k],i}>c)
$
denote the number of sets in which the nominated maximum exceeds $c$ and define
\begin{equation}\label{eq:thetahat}
\hthet_{\mathrm{DUST\text{-}MNS}} = 1 - \left(1 - \frac{R_n}{n}\right)^{1/k}.
\end{equation}
Then $\hthet_{\mathrm{DUST\text{-}MNS}}=g_k(\hat q_k)$, where
$
\hat q_k=\frac{R_n}{n}
$, 
$
g_k(x)=1-(1-x)^{1/k},
$
and $g_k(q_k)=\theta$. Also\begin{equation}\label{eq:bias}
\mathrm{Bias}(\hthet_{\mathrm{DUST\text{-}MNS}})
=
\frac{1}{2}g_k''(q_k)\frac{q_k(1-q_k)}{n}
+ O(n^{-2})
=
\frac{(k-1)}{2k^2}(1-q_k)^{1/k-2}\frac{q_k(1-q_k)}{n}
+O(n^{-2}).
\end{equation}
\end{theorem}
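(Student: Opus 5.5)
The plan is to first verify the algebraic identities, then obtain the bias by a second-order delta-method expansion under the working independence assumption of Section~\ref{sec:theory}, and finally control the remainder so that it is genuinely $O(n^{-2})$.

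\textbf{Identity and distribution of $R_n$.} Under perfect ranking, $p_{[k:k]}$ is the maximum of $k$ exchangeable draws with marginal $F$. Treating these draws as independent, as the pps-DUST calibration does, gives $\Pr(p_{[k:k]}\le c)=F(c)^k=(1-\theta)^k$. Hence $q_k=1-(1-\theta)^k$, and
\[
g_k(q_k)=1-\bigl((1-\theta)^k\bigr)^{1/k}=\theta .
\]
The representation $\hthet_{\mathrm{DUST\text{-}MNS}}=g_k(R_n/n)$ is immediate from \eqref{eq:thetahat}. Under the approximate independence of the $n$ nominated maxima, $R_n\sim\mathrm{Binomial}(n,q_k)$. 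Therefore $\E(\hat q_k)=q_k$ and $\V(\hat q_k)=q_k(1-q_k)/n$. The binomial central moments satisfy $\E(\hat q_k-q_k)^3=O(n^{-2})$ and $\E(\hat q_k-q_k)^4=O(n^{-2})$.

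\textbf{Expansion.} The derivatives of $g_k$ are
\[
g_k'(x)=\tfrac1k(1-x)^{1/k-1},\qquad g_k''(x)=\tfrac{k-1}{k^2}(1-x)^{1/k-2},
\]
\[
g_k'''(x)=\tfrac{(k-1)(2k-1)}{k^3}(1-x)^{1/k-3}.
\]
Fix $\delta$ with $q_k<1-2\delta$ and let $A=\{\hat q_k\le 1-\delta\}$. On $A$, expand by Taylor's theorem:
\[
g_k(\hat q_k)=\theta+g_k'(q_k)(\hat q_k-q_k)+\tfrac12 g_k''(q_k)(\hat q_k-q_k)^2+\tfrac16 g_k'''(\xi)(\hat q_k-q_k)^3 .
\]
Here $\xi$ lies between $q_k$ and $\hat q_k$, so $g_k'''(\xi)$ is bounded by a constant depending only on $\delta$ and $k$. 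Taking expectations, the linear term vanishes and the quadratic term gives $\tfrac12 g_k''(q_k)\,q_k(1-q_k)/n$. Substituting the formula for $g_k''$ yields the second expression in \eqref{eq:bias}.

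\textbf{Main obstacle: the remainder.} The difficulty is that $g_k$ is not smooth at $x=1$, since $g_k''$ and $g_k'''$ blow up there. A naive $O_p(n^{-3/2})$ remainder therefore does not by itself justify an $O(n^{-2})$ bias term. I will handle this by splitting on $A$ and $A^c$.

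On $A^c$, both $g_k(\hat q_k)$ and the Taylor polynomial are bounded, since $g_k$ maps $[0,1]$ into $[0,1]$. Hoeffding's inequality gives $\Pr(A^c)\le \exp(-2n\delta^2)$, so this part contributes $o(n^{-m})$ for every $m$.

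On $A$, the cubic remainder has expectation bounded by $C\,\E|\hat q_k-q_k|^3$. This is only $O(n^{-3/2})$ if bounded crudely. To reach $O(n^{-2})$, I will carry the expansion one order further, to a fourth-order Taylor polynomial with a bounded fourth-derivative remainder on $A$. The cubic term then contributes $\tfrac16 g_k'''(q_k)\E(\hat q_k-q_k)^3=O(n^{-2})$, because the third central moment of a binomial proportion is $q_k(1-q_k)(1-2q_k)/n^2$. The quartic remainder is bounded by a constant times $\E(\hat q_k-q_k)^4=O(n^{-2})$. Combining these bounds gives the stated bias with an $O(n^{-2})$ error, uniformly for $q_k$ bounded away from $1$, which holds whenever $\theta<1$.
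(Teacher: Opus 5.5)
Your proposal is correct and follows the paper's route: under the working independence assumption you take $R_n\sim\mathrm{Binomial}(n,q_k)$ and take expectations in a second-order Taylor expansion of $g_k(\hat q_k)$ about $q_k$. The one real difference is the remainder: the paper just asserts that the $O_p(n^{-3/2})$ remainder contributes $O(n^{-2})$ in expectation, whereas your argument actually justifies this. You truncate to $\{\hat q_k\le 1-\delta\}$, which handles the singularity of $g_k$ at $x=1$ (hit with probability $q_k^n$), bound the complement by Hoeffding, and expand to fourth order using the binomial third central moment $q_k(1-q_k)(1-2q_k)/n^2$.
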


\begin{proof}
Under the working independence approximation,
$
R_n\sim\mathrm{Binomial}(n,q_k),
$
so
$
\E(\hat q_k)=q_k
$
and
$
\V(\hat q_k)=\frac{q_k(1-q_k)}{n}.
$
Also,
\[
g_k'(x)=\frac{1}{k}(1-x)^{1/k-1},
\qquad
\]
A second-order Taylor expansion of $g_k(\hat q_k)$ about $q_k$ gives
\[
g_k(\hat q_k)
=
g_k(q_k)
+g_k'(q_k)(\hat q_k-q_k)
+\frac12 g_k''(q_k)(\hat q_k-q_k)^2
+O_p(n^{-3/2}).
\]
Taking expectations,   using $\E(\hat q_k-q_k)=0$, and noting that the remainder $O_p(n^{-3/2})$ in the Taylor expansion contributes $O(n^{-2})$ after taking expectation,  yield
\[
\E\{g_k(\hat q_k)\}
=
g_k(q_k)
+\frac12 g_k''(q_k)\E\bigl[(\hat q_k-q_k)^2\bigr]
+O(n^{-2}).
\]
Therefore
\[
\E(\hthet_{\mathrm{DUST\text{-}MNS}})
=
\theta
+\frac12 g_k''(q_k)\frac{q_k(1-q_k)}{n}
+O(n^{-2}),
\]
which gives \eqref{eq:bias}.
\end{proof}

%
%\begin{proof}
%Under the working independence approximation,
%$
%R_n\sim\mathrm{Binomial}(n,q_k),
%$
%so that
%$
%\E(\hat q_k)=q_k,
%$
% and
% $
% \V(\hat q_k)=\frac{q_k(1-q_k)}{n}.
%$
%Also,
%$
%g_k'(x)=\frac{1}{k}(1-x)^{1/k-1}.
%$
%A second-order Taylor expansion of $g_k(\hat q_k)$ about $q_k$ yields
%\[
%g_k(\hat q_k)
%=
%g_k(q_k)
%+g_k'(q_k)(\hat q_k-q_k)
%+\frac12 g_k''(q_k)(\hat q_k-q_k)^2
%+\mathcal R_n,
%\]
%where
%\[
%\mathcal R_n
%=
%\frac{1}{6}g_k'''(\xi_n)(\hat q_k-q_k)^3
%\]
%for some random $\xi_n$ between $\hat q_k$ and $q_k$. Since $q_k\in(0,1)$ is fixed and $\hat q_k\to q_k$ in probability, we have $g_k'''(\xi_n)=O_p(1)$, and hence
%$
%\mathcal R_n = O_p\!\left(|\hat q_k-q_k|^3\right).
%$
%Moreover, because $\hat q_k$ is a binomial proportion,
%$
%\E|\hat q_k-q_k|^3 = O(n^{-2}),
%$
%so that $\E(\mathcal R_n)=O(n^{-2})$. Taking expectations and using $\E(\hat q_k-q_k)=0$, we obtain
%\[
%\E\{g_k(\hat q_k)\}
%=
%g_k(q_k)
%+\frac12 g_k''(q_k)\E\bigl[(\hat q_k-q_k)^2\bigr]
%+O(n^{-2}).
%\]
%Since $g_k(q_k)=\theta$ and
%$
%\E\bigl[(\hat q_k-q_k)^2\bigr]
%=
%\V(\hat q_k)
%=
%\frac{q_k(1-q_k)}{n},
%$
%it follows that
%\[
%\E(\hthet_{\mathrm{DUST\text{-}MNS}})
%=
%\theta
%+\frac12 g_k''(q_k)\frac{q_k(1-q_k)}{n}
%+O(n^{-2}),
%\]
%which gives \eqref{eq:bias}.
%\end{proof}
%

\begin{remark}\label{rem:bias}
Because $g_k(x)=1-(1-x)^{1/k}$ is strictly convex for $k>1$, Jensen's inequality implies
\[
\E\{\hthet_\mathrm{DUST\text{-}MNS}\}=\E\{g_k(\hat q_k)\}\ge  g_k(\E[\hat q_k])=g_k(q_k)=\theta.
\]
Hence $\hthet_\mathrm{DUST\text{-}MNS}$ is positively biased in finite samples. From \eqref{eq:bias}, the leading bias is of order $O(n^{-1})$ when $k$ is fixed, so increasing the sample size  $n$ reduces the bias. The set size $k$ also affects the bias through the curvature of $g_k$. For fixed $n$, increasing $k$ generally increases the magnitude of the transformation bias. Thus there is a practical tradeoff. Larger $k$ strengthens the tail-targeting effect of maxima nomination, but for fixed $n$ it also increases finite-sample bias and, in practice, may lead to greater ranking error. This provides further justification for treating $k$ as a small fixed design parameter and controlling precision primarily through $n$.
\end{remark}

\begin{remark}\label{rem:implementation}
In practice, $p_{[k:k],i}$ is not directly observed. The ideal exceedance indicator
$\mathbf{1}(p_{[k:k],i}>c)$ is replaced by an empirical version based on the observed binary outcomes  of a random sample of size $m_i$ in the selected area. If $X_{[k:k],i}\sim\mathrm{Binomial}(m_i,p_{[k:k],i})$, a natural estimator is
\[
\hat Z_i=\mathbf{1}\!\left(\frac{X_{[k:k],i}}{m_i}>c\right).
\]
By the law of large numbers, $X_{[k:k],i}/m_i \to p_{[k:k],i}$ almost surely as $m_i\to\infty$, so $\hat Z_i$ converges to $\mathbf{1}(p_{[k:k],i}>c)$. 
%For finite $m_i$, one may instead use the smoothed indicator
%\begin{equation}\label{eq:smoothed}
%\hat{Z}_i^{(m_i)} = 1 - \sum_{j=0}^{\lfloor m_i c\rfloor}\binom{m_i}{j}
%\left(\frac{X_{[k:k],i}}{m_i}\right)^j\left(1-\frac{X_{[k:k],i}}{m_i}\right)^{m_i-j},
%\end{equation}
%which satisfies
%\[
%\E\!\left[\hat{Z}_i^{(m_i)}\mid p_{[k:k],i}\right]
%=
%\Pr\!\left(\frac{X_{[k:k],i}}{m_i}>c \,\middle|\, p_{[k:k],i}\right),
%\]
%and approaches $\mathbf{1}(p_{[k:k],i}>c)$ as $m_i\to\infty$.
\end{remark}

We construct  three estimators  corresponding  to increasing levels of design sophistication.

%======================================================================
\subsection{SRS estimator}
%======================================================================
Under SRS, $n$ areas are selected at random without replacement from
$\{1,\ldots,N\}$. For a selected area $i$, let
$
X_i=\sum_{j=1}^{m_i} Y_{ij},
$
so that, conditional on $p_i$,
$
X_i \mid p_i \sim \mathrm{Binomial}(m_i, p_i).
$
Define the exceedance indicator
$
Z_i=\mathbf{1}\!(\tfrac{X_i}{m_i}>c).
$
The SRS estimator is then
\begin{equation}\label{eq:srs-est}
\bthet_{\mathrm{SRS}}=\frac{1}{n}\sum_{i=1}^n Z_i
=\frac{1}{n}\sum_{i=1}^n\mathbf{1}\!\left(\frac{X_i}{m_i}>c\right).
\end{equation}

For finite $m_i$, $\bthet_{\mathrm{SRS}}$ targets $\Pr(\frac{X_i}{m_i}>c)$; when $m_i$ is large, this is a close approximation to $\theta=\Pr(p_i>c)$. Its variance, under the working spatial model,  follows from the standard decomposition
\begin{equation}\label{eq:var-srs}
\V(\bthet_{\mathrm{SRS}}) = \frac{\theta(1-\theta)}{n} +
\frac{\sigma^2_\theta}{n^2}\sum_{i=1}^n\sum_{j\ne i}\eta_0^{l_{ij}},
\end{equation}
where $\sigma^2_\theta=\theta(1-\theta)$ since the exceedance indicator has Bernoulli
variance, and the spatial covariance term captures the inter-area dependence in exceedance
events.

\subsection{DUST-SRS estimator}

Under DUST-SRS, $n$ areas are selected via pps-DUST.  The estimator is identical in
form to (\ref{eq:srs-est}) but applied to DUST-selected areas.  Because DUST spreads areas
to minimize spatial overlap, the covariance term vanishes to a good approximation, giving
\begin{equation}\label{eq:var-dust-srs}
\V(\tthet_{\mathrm{DUST\text{-}SRS}}) \approx \frac{\theta(1-\theta)}{n}.
\end{equation}

\begin{remark}[Finite-population correction]\label{rem:fpc}
The variance formulae (\ref{eq:var-srs}) and (\ref{eq:var-dust-srs}) are derived under
a superpopulation model in which the $N$ areal units are treated as an effectively
infinite population relative to the sample size $n$.  When the sampling fraction
$f=n/N$ is non-negligible, both variances should be multiplied by the finite-population
correction factor $(1-f)$.  Specifically,
\begin{equation}\label{eq:var-srs-fpc}
\V_{\mathrm{FPC}}(\bthet_{\mathrm{SRS}})
= (1-f)\left[\frac{\theta(1-\theta)}{n} +
\frac{\theta(1-\theta)}{n^2}\sum_{i}\sum_{j\ne i}\eta_0^{l_{ij}}\right],
\end{equation}
\begin{equation}\label{eq:var-dust-srs-fpc}
\V_{\mathrm{FPC}}(\tthet_{\mathrm{DUST\text{-}SRS}})
\approx (1-f)\,\frac{\theta(1-\theta)}{n}.
\end{equation}
All relative efficiency results  are unchanged by the correction because the factor $(1-f)$ appears in both numerator
and denominator of every efficiency ratio and cancels exactly.  The correction is
therefore relevant for planning the absolute precision of the estimator but does not
affect design comparisons.  In the county stroke application, for sample sizes considered  the correction is negligible in practice.
\end{remark}

\subsection{DUST-MNS estimator}

Under DUST-MNS with parameters $(n,k)$, the estimator is $\hthet_{\mathrm{DUST\text{-}MNS}}$ from
(\ref{eq:thetahat}).  Its variance, obtained by the delta method from
Theorem~\ref{thm:unbiased}, is:
\begin{equation}\label{eq:var-mns}
\V(\hthet_{\mathrm{DUST\text{-}MNS}}) \approx [g_k'(q_k)]^2\cdot\frac{q_k(1-q_k)}{n},
\end{equation}
where $g_k'(q_k)=\frac{1}{k}(1-q_k)^{1/k-1}$ and $q_k=1-(1-\theta)^k$.
Substituting:
\begin{equation}\label{eq:var-mns-explicit}
\V(\hthet_{\mathrm{DUST\text{-}MNS}}) = \frac{[1-(1-\theta)^k](1-\theta)^{2-k}}{k^2\,n}.
\end{equation}
Note that  the variance (\ref{eq:var-mns-explicit}) decreases in $n$ at rate $1/n$ for any fixed $k$, so
precision is controlled entirely by the sample size $n$. Also, the set size $k$
enters through the factor $[1-(1-\theta)^k](1-\theta)^{2-k}/k^2$, which as we show later decreases in
$k$ for $\theta<\theta^\star(k)$, so larger sets improve
efficiency up to the critical threshold. 

%======================================================================
\section{Efficiency  Comparison of Proposed Estimators}\label{sec:effcomp}
%======================================================================

\begin{theorem}\label{thm:chain}
Under the spatial autocorrelation model (\ref{eq:autocorr}) with $\eta_0>0$, and with
{\rm{DUST-MNS}}  design with  parameters $(n,k)$,
\begin{enumerate}[(i)]
\item $\V(\bthet_{\mathrm{SRS}})\ge\V(\tthet_{\mathrm{DUST\text{-}SRS}})$, with
equality iff\/ $\eta_0=0$.
\item $\V(\tthet_{\mathrm{DUST\text{-}SRS}})\ge\V(\hthet_{\mathrm{DUST\text{-}MNS}})$ if and only
if $\theta\le\theta^\star(k)$, where $\theta^\star(k)=1-u_k^\star$ and
$u_k^\star\in(0,1)$ is the unique solution of
\begin{equation}\label{eq:thetastar}
k^2 u^{\,k-1} = 1+u+\cdots+u^{k-1}.
\end{equation}
\item Combining (i) and (ii), for $\theta\le\theta^\star(k)$,
\begin{equation}\label{eq:chain}
\V(\bthet_{\mathrm{SRS}})\;\ge\;\V(\tthet_{\mathrm{DUST\text{-}SRS}})\;\ge\;
\V(\hthet_{\mathrm{DUST\text{-}MNS}}).
\end{equation}
\end{enumerate}
\end{theorem}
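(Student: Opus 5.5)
Part (i) follows directly from the variance formulas \eqref{eq:var-srs} and \eqref{eq:var-dust-srs}, and part (ii) reduces to a one-variable monotonicity argument once we substitute $u=1-\theta$ in \eqref{eq:var-mns-explicit}. Part (iii) is then the concatenation of (i) and (ii). Throughout I take the working approximations already adopted in Section~\ref{sec:theory} as the operative variance expressions: the working independence behind Theorem~\ref{thm:unbiased}, the delta-method formula \eqref{eq:var-mns-explicit}, and the vanishing covariance in \eqref{eq:var-dust-srs}. I also assume $k\ge 2$, since for $k=1$ the two estimators coincide and the comparison is trivially an equality.

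For (i), I subtract \eqref{eq:var-dust-srs} from \eqref{eq:var-srs}. The difference is
\[
\frac{\theta(1-\theta)}{n^2}\sum_{i=1}^n\sum_{j\ne i}\eta_0^{\,l_{ij}}.
\]
For $\theta\in(0,1)$ this is strictly positive whenever $\eta_0>0$ and $n\ge 2$, because every lag $l_{ij}$ is a finite positive integer and so every term is positive. It vanishes exactly when $\eta_0=0$, which gives the equality case.

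For (ii), I put $u=1-\theta\in(0,1)$, so that $1-(1-\theta)^k=1-u^k=(1-u)(1+u+\cdots+u^{k-1})$. The condition $\V(\tthet_{\mathrm{DUST\text{-}SRS}})\ge\V(\hthet_{\mathrm{DUST\text{-}MNS}})$ then reads
\[
u(1-u)\;\ge\;\frac{(1-u)(1+u+\cdots+u^{k-1})\,u^{2-k}}{k^2}.
\]
Dividing by the positive quantity $(1-u)u^{2-k}$ and multiplying by $k^2$ gives exactly $k^2u^{k-1}\ge 1+u+\cdots+u^{k-1}$, the inequality form of \eqref{eq:thetastar}. To locate where this holds, I divide by $u^{k-1}$ and define
\[
h(u)=k^2-\sum_{j=0}^{k-1}u^{-j}.
\]
The function $h$ is continuous and strictly increasing on $(0,1)$, because each term $-u^{-j}$ with $j\ge1$ is strictly increasing. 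It tends to $-\infty$ as $u\downarrow0$, and it satisfies $h(1)=k^2-k>0$ for $k\ge2$. Hence $h$ has a unique root $u_k^\star\in(0,1)$, and $h(u)\ge0$ holds if and only if $u\ge u_k^\star$. Translating back through $\theta=1-u$, this is $\theta\le 1-u_k^\star=\theta^\star(k)$.

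The boundary values $\theta\in\{0,1\}$ give zero variance under both designs and can be set aside. Part (iii) then follows by chaining (i) and (ii) on the range $\theta\le\theta^\star(k)$.

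The only real delicacy is interpretive rather than computational. The ``if and only if'' in (ii) is exact only for the approximate variance expressions \eqref{eq:var-dust-srs} and \eqref{eq:var-mns-explicit}, not for the exact finite-sample variances. I would therefore state at the outset that the theorem is understood at the level of these leading-order variances, which is the level at which the statement is formulated.
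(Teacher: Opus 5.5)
Your proposal is correct and follows essentially the same route as the paper: you get (i) by subtracting the two variance formulas, and you get (ii) by substituting $u=1-\theta$ and using the strict monotonicity of $\sum_{j=0}^{k-1}u^{-j}$ (your $h$ is exactly $k^2-\phi_k$), with the same limits at $u\downarrow 0$ and $u\uparrow 1$, and (iii) by chaining. Your extra caveats are appropriate refinements that the paper leaves implicit: $n\ge 2$ is needed for strict positivity in (i), $k\ge 2$ is needed for uniqueness of $u_k^\star$, and the equivalence holds only for the working leading-order variances.
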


\begin{proof}
Part (i).  By comparing (\ref{eq:var-srs}) and (\ref{eq:var-dust-srs}),
the difference is $\frac{\sigma^2_\theta}{n^2}\sum_i\sum_{j\ne i}\eta_0^{l_{ij}}>0$
whenever $\eta_0>0$. Part (ii).  We need $\V(\tthet)\ge\V(\hthet_\mathrm{DUST\text{-}MNS})$, i.e.,
$\theta(1-\theta)/n \ge [1-(1-\theta)^k](1-\theta)^{2-k}/(k^2 n)$.
This simplifies  to
$k^2\theta(1-\theta)\ge [1-(1-\theta)^k](1-\theta)^{2-k}$.
Writing $u=1-\theta\in(0,1)$ and multiplying through by $u^{k-2}/(1-u)$ gives
$
k^2 u^{k-1}\ge 1+u+\cdots+u^{k-1}.
$
Now define
\[
\phi_k(u)=\frac{1+u+\cdots+u^{k-1}}{u^{k-1}}
=1+u^{-1}+u^{-2}+\cdots+u^{-(k-1)}, \qquad u\in(0,1).
\]
Then $\V(\tthet)\ge\V(\hthet_\mathrm{DUST\text{-}MNS})$ is equivalent to
$
k^2 \ge \phi_k(u).
$
The function $\phi_k(u)$ is continuous and strictly decreasing on $(0,1)$, with
$
\lim_{u\downarrow 0}\phi_k(u)=\infty,
$
and
$
\lim_{u\uparrow 1}\phi_k(u)=k.
$
Since $k^2>k$ for every $k\ge 2$, there exists a unique $u_k^\star\in(0,1)$ such that
$
\phi_k(u_k^\star)=k^2.
$
Therefore
\[
\V(\tthet_{\mathrm{DUST-SRS}})\ge \V(\hthet_{\mathrm{DUST\text{-}MNS}})
\quad\Longleftrightarrow\quad
u\ge u_k^\star
\quad\Longleftrightarrow\quad
\theta\le 1-u_k^\star.
\]
Defining
$
\theta^\star(k)=1-u_k^\star,
$
we conclude that DUST-MNS dominates DUST-SRS whenever
$
\theta\le \theta^\star(k).
$
Part (iii) follows from (i) and (ii).
\end{proof}

\begin{remark}\label{rem:thetastar}
The critical threshold depends only on the set size $k$, not on the sample size $n$.
This is a key structural property of the  design as  the analyst can increase
$n$ to improve precision without affecting the regime in which the design is efficient.
The exact threshold is obtained numerically from (\ref{eq:thetastar}).  For example,
\[
\theta^\star(2)=0.667,\quad
\theta^\star(3)=0.578,\quad
\theta^\star(4)=0.514,\quad
\theta^\star(5)=0.464,\quad
\theta^\star(6)=0.425,\quad
\theta^\star(10)=0.322.
\]
For areal surveillance applications, $k\in\{3,4,5\}$ gives $\theta^\star(k)\in(0.46,0.58)$,
which comfortably covers the typical exceedance regime. 
\end{remark}

Here we obtain a lower bound for the relative efficiency of  DUST-SRS estimator  versus the SRS-based estimator.

\begin{theorem}\label{thm:lambda}
Assume the pairwise neighborhood lags among sampled areas are represented by a random variable
$L\sim g(\cdot)$ with mean $\bar l$. Then
\begin{equation}\label{eq:lambda}
\mathrm{RE}(\tthet_{\mathrm{DUST\text{-}SRS}},\bthet_{\mathrm{SRS}})
=
1+(n-1)\E\!\left(\eta_0^L\right)
\;\ge\;
1+\Lambda,
\end{equation}
where
\begin{equation}\label{eq:Lambda}
\Lambda=(n-1)\eta_0^{\bar l}.
\end{equation}
\end{theorem}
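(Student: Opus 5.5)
The plan is to compute the ratio of the two variance formulas and then apply Jensen's inequality to the convex map $l\mapsto \eta_0^{l}$.

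\textbf{Step 1: the ratio.} By (\ref{eq:var-srs}) and (\ref{eq:var-dust-srs}),
\[
\mathrm{RE}(\tthet_{\mathrm{DUST\text{-}SRS}},\bthet_{\mathrm{SRS}})
=\frac{\V(\bthet_{\mathrm{SRS}})}{\V(\tthet_{\mathrm{DUST\text{-}SRS}})}
=1+\frac{1}{n}\sum_{i=1}^n\sum_{j\ne i}\eta_0^{l_{ij}},
\]
because $\sigma^2_\theta=\theta(1-\theta)$. The common factor $\theta(1-\theta)/n$ cancels. If the finite-population version of Remark~\ref{rem:fpc} is used instead, the factor $(1-f)$ cancels as well.

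\textbf{Step 2: introducing $L$.} The double sum has $n(n-1)$ ordered pairs. So $\frac1n\sum_i\sum_{j\ne i}\eta_0^{l_{ij}}=(n-1)\cdot\frac{1}{n(n-1)}\sum_{i\ne j}\eta_0^{l_{ij}}$. The assumption that the pairwise lags among sampled areas are represented by $L\sim g(\cdot)$ means the average over pairs is read as $\E(\eta_0^L)$, where $L$ is the lag of a randomly chosen pair. This gives the equality $1+(n-1)\E(\eta_0^L)$. Equivalently, one can take expectation over the random sample, so that each ordered pair contributes $\E(\eta_0^{L})$ by exchangeability.

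\textbf{Step 3: the bound.} For $\eta_0\in(0,1)$, the function $h(l)=\eta_0^{l}=e^{l\log\eta_0}$ has $h''(l)=(\log\eta_0)^2\eta_0^{l}>0$, so it is convex in $l$. Jensen's inequality then gives $\E(\eta_0^L)\ge \eta_0^{\E L}=\eta_0^{\bar l}$. Multiplying by $n-1\ge0$ and adding $1$ yields $\mathrm{RE}\ge 1+(n-1)\eta_0^{\bar l}=1+\Lambda$. If $\eta_0=0$, both sides equal $1$, with the convention $0^{l}=0$ for $l\ge1$.

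\textbf{Main obstacle.} The only delicate step is Step 2, namely interpreting the empirical pairwise lag average as an expectation under $g$. The identity should be stated with $\E$ taken over the distribution of sampled pair lags, so that it holds exactly when $g$ is the empirical lag distribution of the sample, and in expectation otherwise. Once that identification is fixed, Jensen's inequality gives the rest directly.
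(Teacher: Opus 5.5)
Your proposal is correct and follows the paper's proof essentially step for step: you form the variance ratio from (\ref{eq:var-srs}) and (\ref{eq:var-dust-srs}), identify the pairwise average $\frac{1}{n(n-1)}\sum_{i\ne j}\eta_0^{l_{ij}}$ with $\E(\eta_0^L)$, and apply Jensen's inequality to the convex map $l\mapsto\eta_0^{l}$. Your extra remarks (the cancellation of $(1-f)$, the $\eta_0=0$ case, and the caveat that the identity is exact only when $g$ is the empirical lag distribution of the sample) are harmless refinements of the same argument.
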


\begin{proof}
From \eqref{eq:var-srs} and \eqref{eq:var-dust-srs},
\[
\mathrm{RE}(\tthet_{\mathrm{DUST\text{-}SRS}},\bthet_{\mathrm{SRS}})
=
\frac{\V(\bthet_{\mathrm{SRS}})}{\V(\tthet_{\mathrm{DUST\text{-}SRS}})}
=
1+\frac{1}{n}\sum_{i\ne j}\eta_0^{\,l_{ij}}.
\]
If the off-diagonal lags are summarized by a generic random variable $L$, then
\[
\frac{1}{n(n-1)}\sum_{i\ne j}\eta_0^{\,l_{ij}}
=
\E(\eta_0^L),
\]
and therefore
\[
\mathrm{RE}(\tthet_{\mathrm{DUST\text{-}SRS}},\bthet_{\mathrm{SRS}})
=
1+(n-1)\E(\eta_0^L).
\]
Since $x\mapsto \eta_0^x$ is convex on $[1,\infty)$ for $0<\eta_0<1$, Jensen's inequality gives
\[
\E(\eta_0^L)\ge \eta_0^{\E(L)}=\eta_0^{\bar l}.
\]
Substituting this bound yields
$
\mathrm{RE}(\tthet_{\mathrm{DUST\text{-}SRS}},\bthet_{\mathrm{SRS}})
\ge
1+(n-1)\eta_0^{\bar l},
$
which proves the result.
\end{proof}

\begin{remark}
The SRS variance (\ref{eq:var-srs}) involves the neighbourhood lag structure through
$\sum_{ij}\eta_0^{l_{ij}}$.  For planning, we model the random lag $L$ between two
randomly selected areas with PMF $g(l)=2N_l/[N(N-1)]$, where $N_l$ is the number of
area pairs at lag $l$.  The zero-truncated binomial distribution ZTBin$(\nu,\rho)$
approximates $g(\cdot)$, with mean
\begin{equation}\label{eq:ztbin-mean}
\bar{\bar{l}}(\rho;\nu) = \sum_{l=1}^\nu
l\,\frac{\binom{\nu}{l}\rho^l(1-\rho)^{\nu-l}}{1-(1-\rho)^\nu}.
\end{equation}
Only the mean lag $\bar{l}$ is required to evaluate the minimum efficiency bound
$1+\Lambda_\theta$ (Theorem~\ref{thm:lambda}).

\end{remark}

Now, we provide the relative efficiency  of DUST-MNS over DUST-SRS estimator.  Define $\Delta_\theta = 1 - \V(\hthet_\mathrm{DUST\text{-}MNS})/\V(\tthet_{\rm DUST\text{-}SRS})$, so
that $\mathrm{RE}(\tthet_{\rm DUST\text{-}SRS},\hthet_\mathrm{DUST\text{-}MNS}) = 1/(1-\Delta_\theta)$.
From (\ref{eq:var-mns-explicit}) and (\ref{eq:var-dust-srs}):
\begin{equation}\label{eq:Delta}
\Delta_\theta = 1 - \frac{[1-(1-\theta)^k](1-\theta)^{2-k}}{k^2\,\theta(1-\theta)} \quad \text{and}\quad \mathrm{RE}=\frac{k^2\theta(1-\theta)^{k-1}}{1-(1-\theta)^k}.
\end{equation}
Note that $\Delta_\theta$ and  RE depend only on $k$ and $\theta$, not on $n$.  The following result provides  some useful properties of $\Delta_\theta$.

\begin{theorem}\label{thm:delta-props}
Let $\Delta_\theta$ be defined by (\ref{eq:Delta}).
Then:
\begin{enumerate}[(a)]
\item $\Delta_\theta>0$  (equivalently, DUST-MNS dominates DUST-SRS)  if and only if $\theta<\theta^\star(k)$, where $\theta^\star(k)$ is defined by the equality case $\V(\hthet_\mathrm{DUST\text{-}MNS})=\V(\tthet_{\rm DUST\text{-}SRS})$.
\item $\Delta_\theta$ is strictly decreasing in $\theta\in(0,1)$.
\item
$
\lim_{\theta\to 0}\Delta_\theta = 1-\frac{1}{k}.
$
\item In general, $\Delta_\theta(\theta)\neq \Delta_\theta(1-\theta)$.
\end{enumerate}
\end{theorem}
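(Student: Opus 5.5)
The plan is to rewrite $\Delta_\theta$ as a function of $u=1-\theta$ alone. All four parts then follow from the properties of $\phi_k$ already established in the proof of Theorem~\ref{thm:chain}.

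Starting from (\ref{eq:Delta}), cancel one factor $(1-\theta)=u$ and use $1-u^k=(1-u)(1+u+\cdots+u^{k-1})$. This gives
\[
\frac{[1-(1-\theta)^k](1-\theta)^{2-k}}{k^2\,\theta(1-\theta)}
=\frac{(1-u^k)\,u^{1-k}}{k^2(1-u)}
=\frac{1+u+\cdots+u^{k-1}}{k^2u^{k-1}}
=\frac{\phi_k(u)}{k^2},
\]
and hence
\[
\Delta_\theta=1-\frac{\phi_k(1-\theta)}{k^2},\qquad \phi_k(u)=1+u^{-1}+\cdots+u^{-(k-1)}.
\]
Throughout, I take $k\ge 2$. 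For $k=1$ we have $\phi_1\equiv 1$, so $\Delta_\theta\equiv 0$ and the statements degenerate.

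\textbf{Part (a).} $\Delta_\theta>0$ if and only if $\phi_k(u)<k^2$. By the proof of Theorem~\ref{thm:chain}(ii), $\phi_k$ is continuous and strictly decreasing with $\phi_k(u_k^\star)=k^2$. So $\phi_k(u)<k^2$ exactly when $u>u_k^\star$, that is, when $\theta<\theta^\star(k)$. The strict inequality here comes from the strict monotonicity of $\phi_k$, which upgrades the weak version in Theorem~\ref{thm:chain}.

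\textbf{Part (b).} Each term $u^{-j}$ is strictly decreasing on $(0,1)$, so $\phi_k$ is strictly decreasing in $u$. Since $u=1-\theta$ is strictly decreasing in $\theta$, the map $\theta\mapsto\phi_k(1-\theta)$ is strictly increasing. Therefore $\Delta_\theta$ is strictly decreasing in $\theta$.

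\textbf{Part (c).} As $\theta\to 0$ we have $u\to 1$, and $\phi_k(u)\to k$ because it is a finite sum of continuous terms each tending to $1$. Hence $\Delta_\theta\to 1-k/k^2=1-1/k$.

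\textbf{Part (d).} A general argument comes from comparing the two endpoint behaviours. As $\theta\to 1$ we have $u\to 0$, so $\phi_k(u)\to\infty$ and $\Delta_\theta\to-\infty$, whereas by (c) $\Delta_\theta$ stays bounded as $\theta\to 0$. If $\Delta_\theta(\theta)=\Delta_\theta(1-\theta)$ held for all $\theta$, these two limits would have to coincide, which they do not. As a concrete check with $k=2$, we have $\phi_2(u)=1+1/u$:
\[
\Delta_{1/4}=1-\tfrac{7/3}{4}=\tfrac{5}{12},\qquad \Delta_{3/4}=1-\tfrac{5}{4}=-\tfrac14 .
\]
In fact (b) already rules out symmetry, since a strictly decreasing function cannot satisfy $h(\theta)=h(1-\theta)$ at any $\theta\neq 1/2$.

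The only step requiring care is the algebraic reduction to $\phi_k$; in particular, the powers of $u$ must be tracked correctly. Once that is done, every part is immediate.
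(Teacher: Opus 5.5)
Your proposal is correct and follows essentially the same route as the paper. The paper also substitutes $u=1-\theta$ to write $\Delta_\theta=1-\frac{1}{k^2}\sum_{j=0}^{k-1}(1-\theta)^{-j}=1-\phi_k(1-\theta)/k^2$, and the differences are cosmetic: you use termwise monotonicity in (b) rather than differentiating, and continuity of $\phi_k$ at $u=1$ in (c) rather than the asymptotic equivalence $1-(1-\theta)^k\sim k\theta$, while your strict-monotonicity argument for (d) gives the sharper conclusion $\Delta_\theta(\theta)\neq\Delta_\theta(1-\theta)$ for every $\theta\neq 1/2$, where the paper only asserts asymmetry of the formula.
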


\begin{proof}
Part (a) follows directly from the definition
 of $\Delta_\theta$. 
It is easy to see that  $\Delta_\theta>0$ holds exactly when $\V(\hthet_\mathrm{DUST\text{-}MNS})<\V(\tthet_{\rm DUST\text{-}SRS})$, which is equivalent to $\theta<\theta^\star(k)$. For part (b), write $u=1-\theta$. Then
\[
\Delta_\theta
=
1-\frac{1-u^k}{k^2(1-u)}u^{1-k}
=
1-\frac{1+u+\cdots+u^{k-1}}{k^2u^{k-1}}
=
1-\frac{1}{k^2}\sum_{j=0}^{k-1}(1-\theta)^{-j}.
\]
Differentiating gives
\[
\frac{d}{d\theta}\Delta_\theta
=
-\frac{1}{k^2}\sum_{j=1}^{k-1} j(1-\theta)^{-j-1}<0,
\]
so $\Delta_\theta$ is strictly decreasing on $(0,1)$.
For part (c), as $\theta\to 0$,
\[
1-(1-\theta)^k \sim k\theta,
\qquad
(1-\theta)^{2-k}\to 1,
\qquad
\theta(1-\theta)\sim\theta.
\]
Hence
\[
\frac{[1-(1-\theta)^k](1-\theta)^{2-k}}{k^2\theta(1-\theta)}
\to \frac{1}{k},
\]
and therefore
$
\Delta_\theta\to 1-\frac{1}{k}.
$
Part (d) is immediate from the explicit formula, which is not symmetric in $\theta$ and $1-\theta$.
\end{proof}

%%%%%%%%%%%%%%%%%%%%%%%%%%%%%%%%
%
%
%%%%%%%%%%%%%%%%%%%%%%%%%%%%%%%%

\subsection{Beta model efficiency evaluation}\label{sec:efficiency}

To obtain explicit efficiency expressions, we specialize to the case
\[
p_i\sim \mathrm{Beta}(\alpha,\beta).
\]
Then the exceedance probability is
$
\theta=\Pr(p_i>c)=1-I_c(\alpha,\beta),
$
where $I_c(\alpha,\beta)$ denotes the regularized incomplete beta function. Under the working model,
\[
\V(\tthet_{\mathrm{DUST\text{-}SRS}})=\frac{I_c(\alpha,\beta)(1-I_c(\alpha,\beta))  }{n}.
\]
Moreover, the lower bound on the relative efficiency of DUST-SRS relative to SRS is
\begin{equation}\label{eq:Lambda-beta}
1+\Lambda(\eta_0,n,\bar l)=1+(n-1)\eta_0^{\bar l},
\end{equation}
which does not depend on $(\alpha,\beta)$.
For DUST-MNS relative to DUST-SRS we have 
\begin{equation}\label{eq:RE-beta-closed}
\mathrm{RE}(\hthet_{\mathrm{DUST\text{-}MNS}},\tthet_{\mathrm{DUST\text{-}SRS}})
=
\frac{\V(\tthet_{\mathrm{DUST\text{-}SRS}})}{\V(\hthet_{\mathrm{DUST\text{-}MNS}})}
=
\frac{k^2\{1-I_c(\alpha,\beta)\}\,I_c(\alpha,\beta)^{k-1}}
{1-I_c(\alpha,\beta)^k}.
\end{equation}
Thus, once $c$, $\alpha$, and $\beta$ are specified, the relative efficiency is determined explicitly.

Tables~\ref{tab:Lambda}--\ref{tab:bias} summarize the resulting numerical evaluations. Table~\ref{tab:Lambda} reports the lower bound $1+\Lambda(\eta_0,n,\bar l)$ for DUST-SRS relative to SRS. Table~\ref{tab:Delta} reports $\mathrm{RE}(\hthet_{\mathrm{DUST\text{-}MNS}},\tthet_{\mathrm{DUST\text{-}SRS}})$ for selected values of $\theta$ and $k$; entries with $\theta>\theta^\star(k)$ are marked by $\dagger$, indicating that DUST-MNS is no longer more efficient than DUST-SRS. Table~\ref{tab:thetastar} gives the critical threshold $\theta^\star(k)$. Finally, Table~\ref{tab:bias} reports the exact finite-sample bias of the calibrated MNS estimator for $n=10,20$ and $k=2,3,4,5$.

Several patterns are clear. The gain of DUST-SRS over SRS increases with $n$ and with the strength of spatial dependence, as reflected in $\eta_0^{\bar l}$. For DUST-MNS relative to DUST-SRS, efficiency gains are largest when $\theta$ is small and diminish as $\theta$ approaches $\theta^\star(k)$. The bias table shows a different pattern: for fixed $k$, the bias decreases as $n$ increases, while for fixed $n$, it increases with $k$. Thus, although larger set sizes can improve variance performance in the rare-exceedance regime, they also induce greater finite-sample bias.

\begin{table}[h!]
\caption{Lower bound on the relative efficiency of DUST-SRS relative to SRS,
$
1+\Lambda(\eta_0,n,\bar l)=1+(n-1)\eta_0^{\bar l}.
$
This bound does not depend on $\theta$ or on $(\alpha,\beta)$.}
\label{tab:Lambda}
\begin{center}
\begin{tabular}{cc|cccccc}
\toprule
 & & \multicolumn{3}{c}{$\bar{l}=1.6$, $n=10$} & \multicolumn{3}{c}{$\bar{l}=2.4$, $n=20$}\\
\cmidrule(lr){3-5}\cmidrule(lr){6-8}
 & & $\eta_0=0.2$ & $\eta_0=0.5$ & $\eta_0=0.8$ & $\eta_0=0.2$ & $\eta_0=0.5$ & $\eta_0=0.8$\\
\midrule
\multicolumn{2}{c|}{All $\theta$, all $(\alpha,\beta)$} & 1.685 & 3.969& 7.298& 1.399 & 4.599& 12.122\\
\bottomrule
\end{tabular}
\end{center}
\end{table}

\begin{table}[h!]
\caption{Relative efficiency of DUST-MNS relative to DUST-SRS,
$\mathrm{RE}(\hthet_{\mathrm{DUST\text{-}MNS}},\tthet_{\mathrm{DUST\text{-}SRS}})$ 
for selected values of $\theta$ and set size $k$. Entries with $\theta>\theta^\star(k)$ are marked with $\dagger$.}
\label{tab:Delta}
\begin{center}
\begin{tabular}{c|cccccc}
\toprule
$\theta$ & $k=2$ & $k=3$ & $k=4$ & $k=5$ & $k=6$ & $k=10$\\
\midrule
0.05 & 1.949 & 2.848 & 3.698 & 4.501 & 5.258 & 7.853\\
0.10 & 1.895 & 2.690 & 3.392 & 4.005 & 4.537 & 5.948\\
0.15 & 1.838 & 2.528 & 3.084 & 3.519 & 3.847 & 4.326\\
0.20 & 1.778 & 2.361 & 2.775 & 3.046 & 3.198 & 3.007\\
0.25 & 1.714 & 2.189 & 2.469 & 2.593 & 2.598 & 1.989\\
0.30 & 1.647 & 2.014 & 2.167 & 2.165 & 2.057 & 1.246\\
0.34 & 1.590 & 1.871 & 1.930 & 1.844 & 1.671 & $\dagger$\\
0.35 & 1.576 & 1.835 & 1.872 & 1.767 & 1.581 & $\dagger$\\
0.40 & 1.500 & 1.653 & 1.588 & 1.405 & 1.175 & $\dagger$\\
0.42 & 1.468 & 1.580 & 1.478 & 1.272 & 1.032 & $\dagger$\\
\bottomrule
\end{tabular}
\end{center}
\end{table}

\begin{table}[h!]
\caption{Critical threshold $\theta^\star(k)$. For $\theta<\theta^\star(k)$, DUST-MNS is more efficient than DUST-SRS; for $\theta\ge \theta^\star(k)$, DUST-SRS is at least as efficient. The threshold depends only on the set size $k$, not on the sample size $n$.}
\label{tab:thetastar}
\begin{center}
\begin{tabular}{c|cccccccc}
\toprule
$k$ & 2 & 3 & 4 & 5 & 6 & 7 & 8 & 10\\
\midrule
$\theta^\star(k)$ & 0.6667 & 0.5785 & 0.5140 & 0.4643 & 0.4247 & 0.3921 & 0.3649 & 0.3215\\
\bottomrule
\end{tabular}
\end{center}
\end{table}

\begin{table}[h!]
\caption{Exact finite-sample bias $\E[\hthet_\mathrm{DUST\text{-}MNS}]-\theta$ for the calibrated MNS estimator, for sample sizes $n=10$ and $n=20$ and set sizes $k=2,3,4,5$. }
\label{tab:bias}
\begin{center}
\begin{tabular}{c|cccc|cccc}
\toprule
& \multicolumn{4}{c|}{$n=10$} & \multicolumn{4}{c}{$n=20$} \\
\cmidrule(lr){2-5}\cmidrule(lr){6-9}
$\theta$ & $k=2$ & $k=3$ & $k=4$ & $k=5$ & $k=2$ & $k=3$ & $k=4$ & $k=5$ \\
\midrule
0.10 & 0.0028 & 0.0040 & 0.0049 & 0.0057 & 0.0014 & 0.0019 & 0.0023 & 0.0026 \\
0.20 & 0.0061 & 0.0099 & 0.0149 & 0.0251 & 0.0029 & 0.0045 & 0.0059 & 0.0076 \\
0.30 & 0.0104 & 0.0219 & 0.0492 & 0.1024 & 0.0048 & 0.0083 & 0.0135 & 0.0273 \\
0.40 & 0.0169 & 0.0510 & 0.1265 & 0.2305 & 0.0072 & 0.0158 & 0.0424 & 0.1074 \\
\bottomrule
\end{tabular}
\end{center}
\end{table}

%======================================================================
\section{Imperfect Ranking and Inference}\label{sec:imperf}
%======================================================================

In practice, the maximum area in each set is identified using an auxiliary variable $a_i$
rather than the true $p_i$. Let
\[
q_k^{\,e}=\sum_{r=1}^k \nu_{rk}\,\Pr(p_{r:k}>c)
\]
denote the exceedance probability of the nominated unit under a doubly stochastic misranking matrix $\boldsymbol{\nu}=(\nu_{rs})$ in the sense of \citet{BohnWolfe1994}. Related ideas for handling uncertain or probabilistic rankings also appear in the judgment post-stratification literature; see \citet{MacEachernStasnyWolfe2004}.
Then $q_k^{\,e}\le q_k$, with equality only under perfect ranking. A key point is that the perfect-ranking inverse
$
\theta=1-(1-q_k)^{1/k}
$
no longer applies under imperfect ranking. Thus the estimator
\[
1-\left(1-\frac{R_n^{\,e}}{n}\right)^{1/k}
\]
is generally biased for $\theta$ when ranking is imperfect. A coherent analysis must instead calibrate through the imperfect-ranking map
\[
h_\nu(\theta)=q_k^{\,e}(\theta)
=\sum_{r=1}^k \nu_{rk}\,\Pr(p_{r:k}>c),
\]
and define
\[
\hat\theta_{\mathrm{DUST\text{-}MNS}}^{\,e}=h_\nu^{-1}(R_n^{\,e}/n),
\]
provided that $h_\nu$ is strictly increasing.

\subsection{\texorpdfstring{A one-parameter working model based on Kendall's $\tau$}{A one-parameter working model based on Kendall's tau}}\label{ssec:tau}

For sensitivity analysis, it is convenient to work with the interpolation
\begin{equation}\label{eq:qeff}
q_k^{\,e}(\tau)=\tau^2 q_k + (1-\tau^2)\theta,
\qquad \tau\in[0,1],
\end{equation}
which connects perfect ranking ($\tau=1$, so $q_k^{\,e}=q_k$) and random nomination
($\tau=0$, so $q_k^{\,e}=\theta$). This is used here as a pragmatic working model rather than a structural consequence of the ranking mechanism.

Under \eqref{eq:qeff}, the imperfect-ranking estimator is obtained by solving
\[
q_k^{\,e}(\tau;\theta)=\frac{R_n^{\,e}}{n}
\]
for $\theta$, typically by one-dimensional numerical root finding. The corresponding delta-method variance is then based on the derivative of the inverse map $h_\tau^{-1}$,
\[
\V(\hat\theta_{\mathrm{DUST\text{-}MNS}}^{\,e})
\approx
\left[\frac{d\theta}{dq_k^{\,e}}\right]^2
\frac{q_k^{\,e}(1-q_k^{\,e})}{n},
\]
where
\[
\frac{dq_k^{\,e}}{d\theta}
=
\tau^2\frac{dq_k}{d\theta}+(1-\tau^2),
\qquad
\frac{dq_k}{d\theta}=k(1-\theta)^{k-1}.
\]
Thus imperfect ranking alters both the effective exceedance probability and the calibration slope. As $\tau$ decreases, $q_k^{\,e}$ moves from the perfect-ranking value $q_k$ toward the random-nomination baseline $\theta$, and the advantage of maxima nomination is correspondingly reduced.

For the county stroke application, the empirical Kendall correlation between stroke prevalence and diabetes prevalence is high, so the imperfect-ranking estimator is expected to remain close to the perfect-ranking benchmark. In the numerical section, we therefore treat \eqref{eq:qeff} as a sensitivity model and examine how the estimator behaves over a range of plausible $\tau$ values, with particular attention to the empirical value observed in the data.

\subsection{Variance estimation and confidence intervals}\label{sec:inference}

The relevant asymptotic regime for the present design holds the set size $k$ fixed and lets the number of sets $n$ increase. Under the working approximation that the selected maxima are independent across sets, we have
$
R_n\sim \mathrm{Binomial}(n,q_k).
$
Hence
\[
\sqrt{n}\,(\hat q_k-q_k)\xrightarrow{d}N\!\bigl(0,q_k(1-q_k)\bigr),
\qquad
\hat q_k=\frac{R_n}{n}.
\]
Since
$
\hat\theta_{\mathrm{DUST\text{-}MNS}}=g_k(\hat q_k),
$
and 
$
g_k(x)=1-(1-x)^{1/k},
$
the delta method yields
\begin{equation}\label{eq:sigma_mns}
\sqrt{n}\,\bigl(\hat\theta_{\mathrm{DUST\text{-}MNS}}-\theta\bigr)
\xrightarrow{d}
N\!\left(0,\,[g_k'(q_k)]^2q_k(1-q_k)\right).
\end{equation}
%where
%\begin{equation}\label{eq:sigma_mns}
%\V(\hat\theta_{\mathrm{DUST\text{-}MNS}})
%\approx
%[g_k'(q_k)]^2\frac{q_k(1-q_k)}{n}
%=
%\frac{[1-(1-\theta)^k](1-\theta)^{2-k}}{k^2\,n}.
%\end{equation}
%
For moderate $n$, the normal approximation takes the form
\[
\hat\theta_{\mathrm{DUST\text{-}MNS}} \approx N\!\left(
\theta + \frac12 g_k''(q_k)\frac{q_k(1-q_k)}{n},
\; [g_k'(q_k)]^2\frac{q_k(1-q_k)}{n}
\right),
\]
with
\[
g_k''(q_k)=\frac{k-1}{k^2}(1-q_k)^{1/k-2}.
\]
Thus the leading finite-sample bias is of order $O(n^{-1})$, while the variance is of order $O(n^{-1})$ for fixed $k$. A plug-in estimator of $\V(\hat\theta_{\mathrm{DUST\text{-}MNS}})$ is obtained from \eqref{eq:sigma_mns} by replacing $\theta$ with $\hat\theta_{\mathrm{DUST\text{-}MNS}}$and an  approximate $(1-\alpha)$ confidence interval for $\theta$ is therefore
\begin{equation}\label{eq:CI}
\hat\theta_{\mathrm{DUST\text{-}MNS}}
-
\widehat{\mathrm{Bias}}(\hat\theta_{\mathrm{DUST\text{-}MNS}})
\;\pm\;
z_{\alpha/2}\sqrt{\widehat{\V}(\hat\theta_{\mathrm{DUST\text{-}MNS}})},
\end{equation}
where $\widehat{\mathrm{Bias}}(\hat\theta_{\mathrm{DUST\text{-}MNS}})$ is obtained by plug-in substitution in the leading bias formula \eqref{eq:bias}. For small or moderate $n$, a bootstrap interval may be preferable.

The leading bias term in \eqref{eq:bias} suggests the bias-corrected estimator
\begin{equation}\label{eq:bc_est}
\hat\theta_{\mathrm{BC}}
=
\hat\theta_{\mathrm{DUST\text{-}MNS}}
-
\widehat{\mathrm{Bias}}(\hat\theta_{\mathrm{DUST\text{-}MNS}}),
\end{equation}
where
\begin{equation}\label{eq:bias_plugin}
\widehat{\mathrm{Bias}}(\hat\theta_{\mathrm{DUST\text{-}MNS}})
=
\frac{(k-1)}{2k^2\,n}
(1-\hat q_k)^{1/k-2}\,\hat q_k(1-\hat q_k),
\qquad
\hat q_k=1-(1-\hat\theta_{\mathrm{DUST\text{-}MNS}})^k.
\end{equation}
Under the fixed-$k$, large-$n$ regime, this removes the $O(n^{-1})$ bias term and leaves a residual bias of order $O(n^{-2})$, while the variance remains unchanged to first order. An approximate $(1-\alpha)$ confidence interval based on the bias-corrected estimator is
\begin{equation}\label{eq:CI_bc}
\hat\theta_{\mathrm{BC}}
\;\pm\;
z_{\alpha/2}\sqrt{\widehat{\V}(\hat\theta_{\mathrm{DUST\text{-}MNS}})}.
\end{equation}

%======================================================================
\section{Estimating County-Level Stroke Prevalence in the United States}\label{sec:application}
%======================================================================

Stroke is a leading cause of death and long-term disability in the United States, and
identifying geographic concentrations of elevated burden is a central objective of
chronic-disease surveillance \citep{GreenlundEtAl2022,Hacker2024}. County-level chronic
disease mapping is routinely used to allocate prevention resources and guide geographically
targeted interventions
\citep{StulbergEtAl2024,CarlsonEtAl2023,ChenEtAl2026,BenavidezEtAl2024}. In this setting,
the inferential question is naturally threshold-based. A  public-health agency may wish to
estimate the proportion of counties whose stroke prevalence exceeds a high-burden
benchmark, rather than summarize the national mean prevalence. This is precisely the
estimand $\theta=\Pr(p_i>c)$ for which DUST-MNS is designed. The CDC PLACES platform
provides age-adjusted county-level chronic disease estimates for all U.S.\ counties and
county equivalents \citep{GreenlundEtAl2022}, yielding a large areal population with rich
spatial structure and a medically meaningful outcome.

This section uses the county data in two distinct but complementary ways. First, the
national county frame is treated as a fixed finite population.  We compute its observed exceedance proportion, evaluate its spatial and ranking
structure, and use that frame as the population from which repeated samples are drawn in
the Monte Carlo study. Second, the superpopulation development in earlier sections is used
as an organizing model for understanding why the design should work well in this setting.
Thus, the theory provides the efficiency benchmark and qualitative guidance, while the real
data analysis evaluates how the design behaves on the observed county population itself.

\subsection{Data and design inputs}\label{ssec:data}

Let $i=1,\ldots,N$ index U.S.\ counties (or county equivalents) with valid CDC PLACES
estimates. The study variable $p_i$ is the age-adjusted prevalence of stroke among adults
aged $\ge 18$ years, and the auxiliary ranking variable $a_i$ is the age-adjusted
prevalence of diagnosed diabetes in the same population, both taken from the same PLACES
county release. County population $M_i$, used for PPS weighting in the DUST step, is taken
from the population field distributed with the PLACES file and merged by county FIPS code.
County adjacency is derived from the U.S.\ Census county adjacency file and the county
cartographic boundary shapefile, which together define the contiguity graph used to compute
the neighborhood lag $l_{ij}$ and implement the DUST weights in \eqref{eq:dust}. All data
sources are publicly available, and the full workflow is implemented in the accompanying
script.

For the threshold-oriented surveillance objective, we set
\[
  c = Q_{0.90}(\{p_i: i=1,\ldots,N\}),
\]
the empirical 90th percentile of county stroke prevalence, so that $\theta=\Pr(p_i>c)$ is
the proportion of counties in the upper tail of the national distribution. After removing
counties with missing values, the analytic dataset contains $N=2921$ counties. The
empirical 90th percentile is $c=4.1\%$, giving a census exceedance proportion of
\[
\hat\theta = \frac{1}{N}\sum_{i=1}^N I(p_i>c) = 0.0866.
\]
This value is slightly below $0.10$ because the threshold is computed from the empirical
county distribution and the exceedance event is defined by the strict inequality $p_i>c$;
with a discrete finite set of county-level estimates, ties at or near the empirical 90th
percentile need not yield an exact upper-decile count. Table~\ref{tab:county_empirical_summary}
summarizes the main empirical features of the dataset. Two are especially favorable for
DUST-MNS. First, stroke prevalence exhibits strong spatial dependence, indicating that
spatial spreading should yield meaningful gains over ordinary random sampling. Second, the
auxiliary ranking variable is highly informative: the association between county stroke and
diabetes prevalence is strong on both linear and rank scales, suggesting that the
imperfect-ranking version of DUST-MNS should remain close to the ideal benchmark.

\begin{table}[htbp]
\centering
\caption{Empirical characteristics of the U.S.\ county-level stroke dataset.}
\label{tab:county_empirical_summary}
\begin{tabular}{lc}
\toprule
Quantity & Value \\
\midrule
Number of counties retained $(N)$ & 2921 \\
Threshold $c$ (90th percentile of stroke prevalence) & $4.1\%$ \\
Census exceedance proportion $\hat\theta$ & 0.0866 \\
Pearson correlation: stroke vs.\ diabetes & 0.9199 \\
Kendall $\tau$: stroke vs.\ diabetes & 0.7621 \\
Moran's $I$ for county stroke prevalence & 0.6141 \\
Mean graph lag $\bar l$ & 26.98 \\
\bottomrule
\end{tabular}
\end{table}

%These empirical values place the application in the rare-to-moderate exceedance regime,
%which is the setting in which maxima nomination is most attractive. The strong spatial
%dependence motivates the DUST spreading step, while the high concordance between stroke and
%diabetes prevalence makes diabetes prevalence a practically effective auxiliary variable for
%ranking counties within candidate sets.

\subsection{Empirical patterns and design diagnostics}\label{ssec:diagnostics}

Figures~\ref{fig:maps_panel} and~\ref{fig:analysis_panel} illustrate the two structural
features that motivate the design. Figure~\ref{fig:maps_panel} shows the continuous stroke
prevalence map together with the binary exceedance map. The prevalence choropleth reveals
pronounced spatial clustering of elevated burden in the Southeast, Appalachia, and the
lower Mississippi Delta, exactly the type of pattern for which DUST is intended to produce
better spatial coverage. The exceedance map confirms that high-burden counties occur in
contiguous clusters rather than as isolated units, so an ordinary random sample may either
over- or under-represent these regions.

Figure~\ref{fig:analysis_panel} shows the two quantities that drive the performance of
DUST-MNS in this application. The histogram confirms that the threshold $c=4.1\%$ lies in
the upper tail of the county stroke distribution, so the inferential target is genuinely a
tail probability rather than a mean. The scatter plot shows a strong positive association
between county stroke prevalence and diabetes prevalence. This supports the use of diabetes
prevalence as the auxiliary variable for within-set ranking, and suggests that the
imperfect-ranking version of DUST-MNS should closely approximate the perfect-ranking
benchmark.

\begin{figure}[htbp]
\centering
\safeincludegraphics[width=0.48\textwidth]{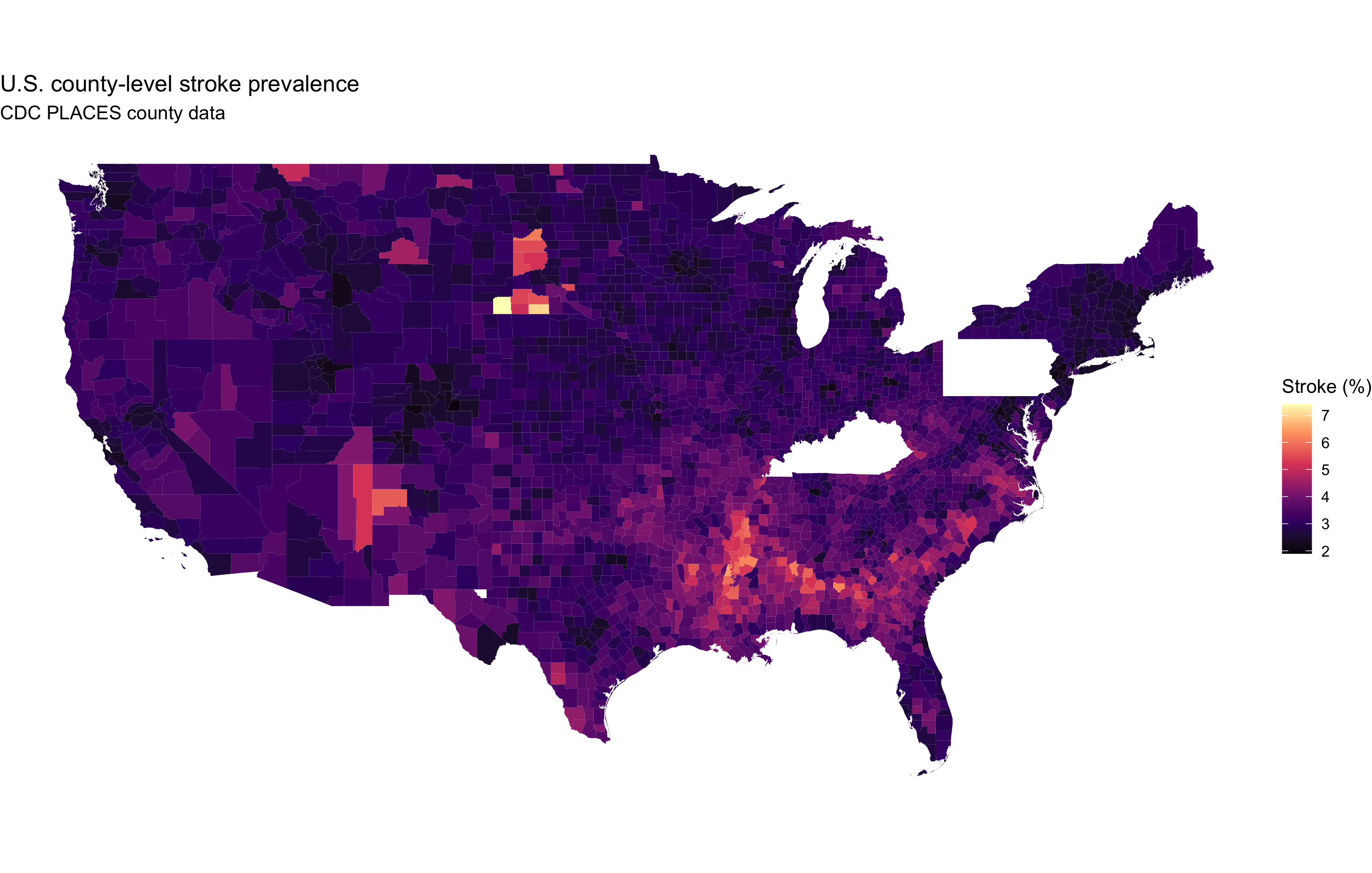}\hfill
\safeincludegraphics[width=0.48\textwidth]{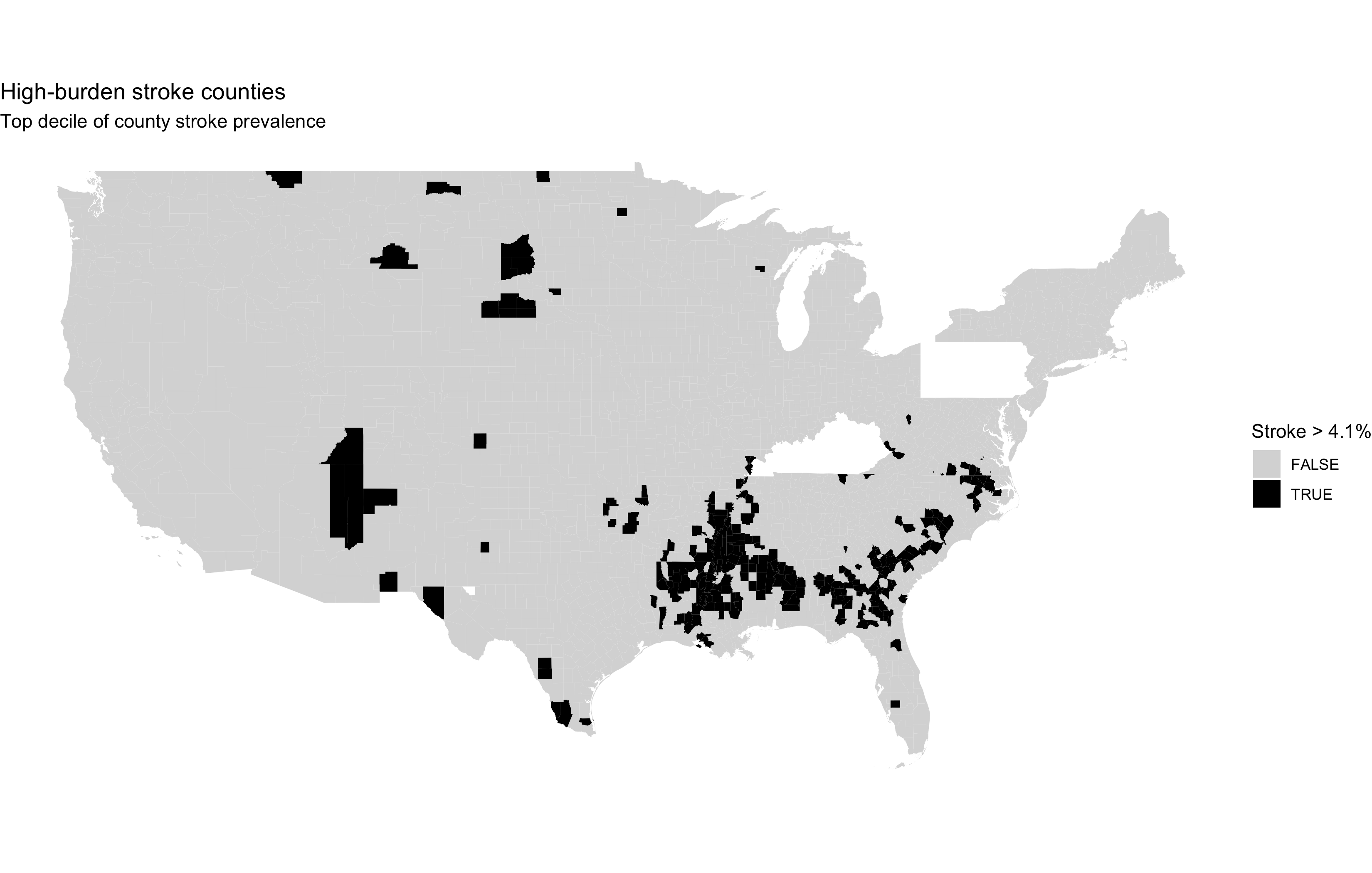}
\caption{Left: county-level stroke prevalence (age-adjusted, adults $\ge 18$ years) across
the retained U.S.\ counties. Elevated burden is concentrated in the Southeast, Appalachia,
and the lower Mississippi Delta. Right: counties whose stroke prevalence exceeds the
empirical 90th percentile threshold $c=4.1\%$ (shown in black). The clustering of
high-burden counties motivates the use of DUST to promote spatial spread.}
\label{fig:maps_panel}
\end{figure}

\begin{figure}[htbp]
\centering
\safeincludegraphics[width=0.48\textwidth]{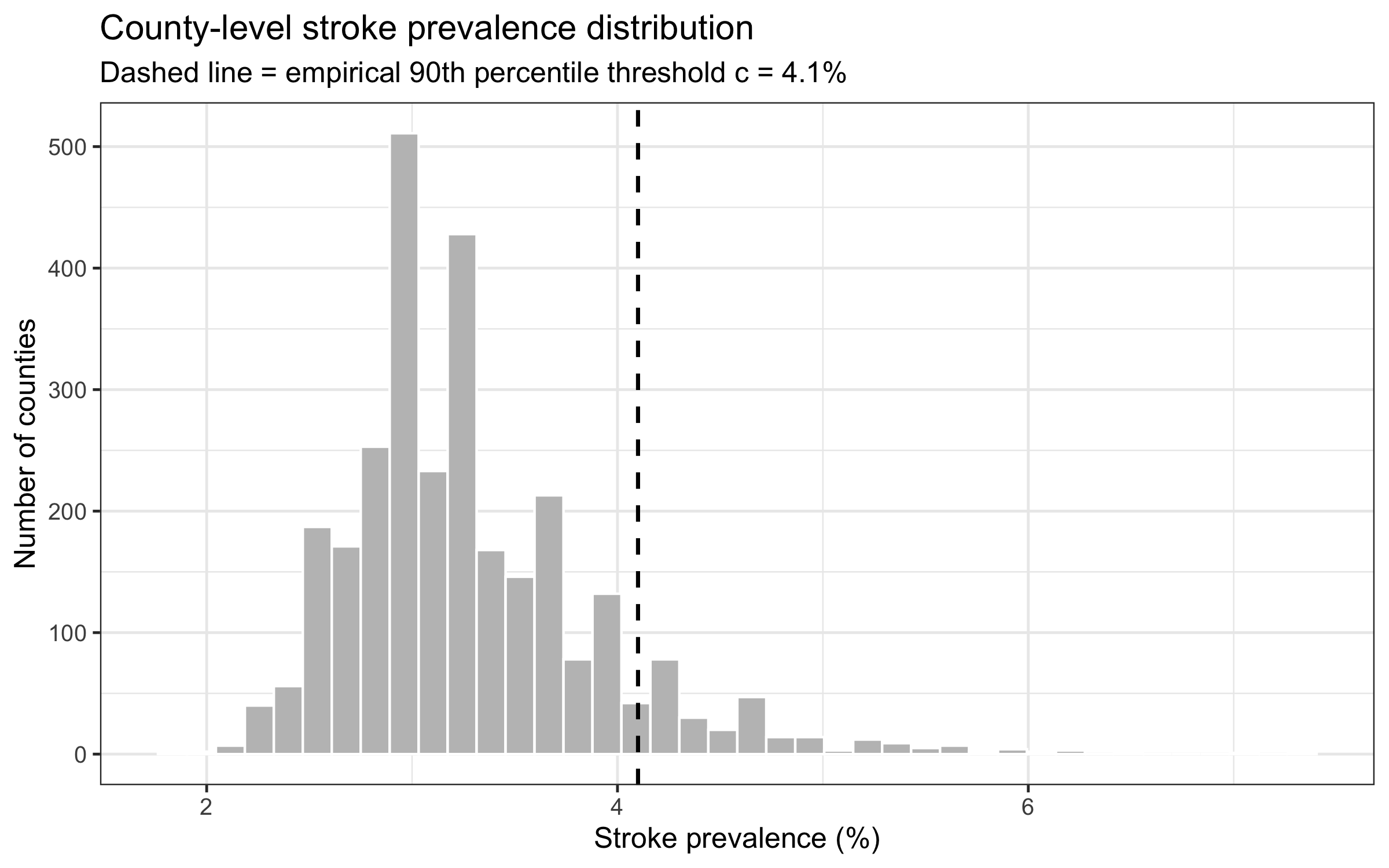}\hfill
\safeincludegraphics[width=0.48\textwidth]{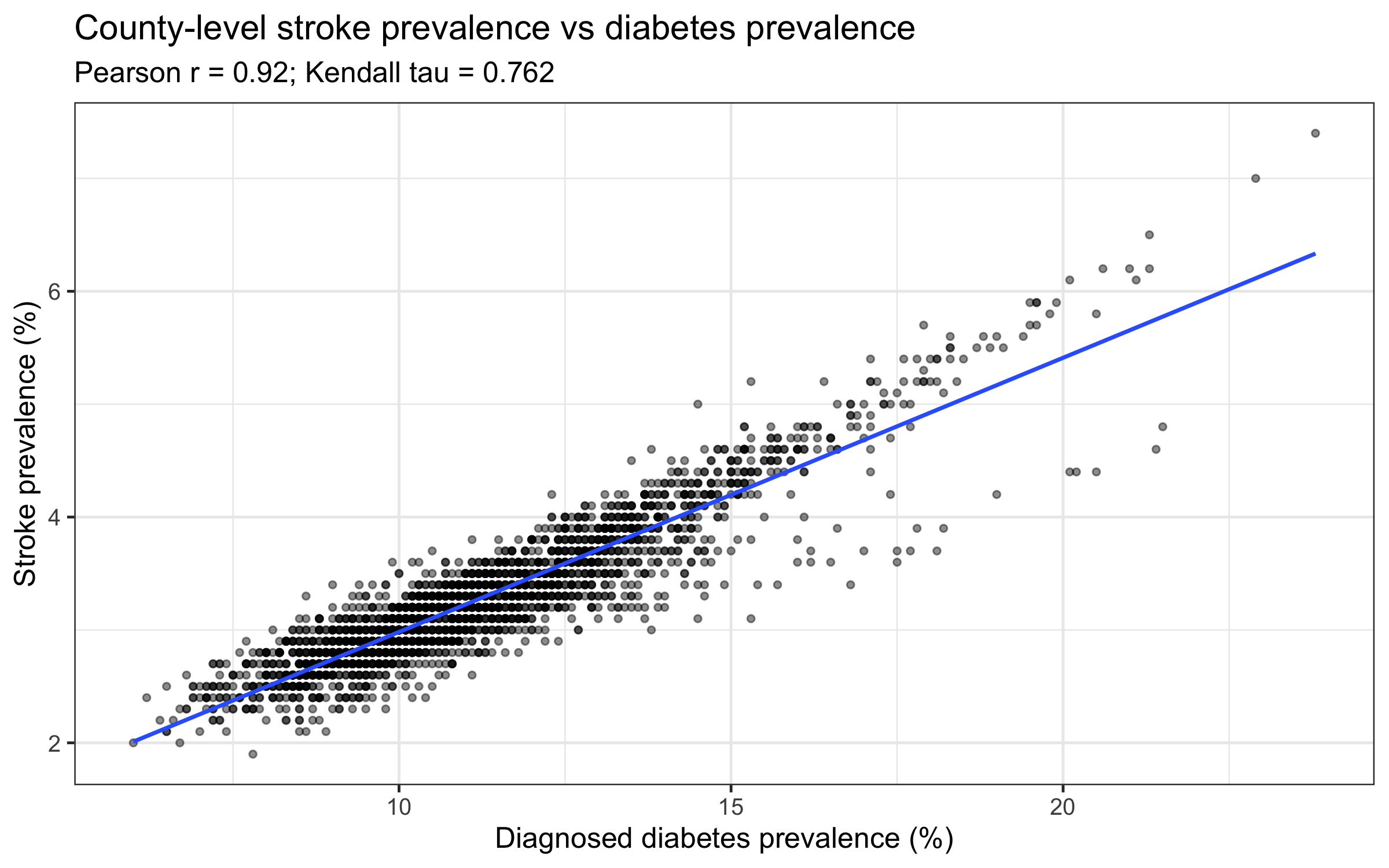}
\caption{Left: empirical distribution of county-level stroke prevalence. The dashed
vertical line marks the threshold $c=4.1\%$. Right: county stroke prevalence versus
diagnosed diabetes prevalence. The strong positive association supports the use of diabetes
prevalence as the auxiliary ranking variable in DUST-MNS.}
\label{fig:analysis_panel}
\end{figure}

\subsection{Theoretical efficiency at the empirical parameter values}\label{ssec:theory_app}

The theoretical efficiency expressions from Sections~\ref{sec:theory}--\ref{sec:effcomp}
can be evaluated directly at the empirical parameter values. In particular, the empirical
estimate $\hat\theta=0.0866$ lies well below the critical thresholds reported in
Table~\ref{tab:thetastar} for the set sizes considered, so the theory predicts a clear
efficiency advantage for DUST-MNS over DUST-SRS in this application. The same empirical
inputs also imply a substantial gain of DUST-SRS over ordinary SRS, reflecting the strong
spatial dependence in county-level stroke prevalence.

Figure~\ref{fig:re_panel} summarizes the main theoretical patterns. The left panel shows
the working-model relative efficiency of imperfect DUST-MNS relative to DUST-SRS as a
function of Kendall's $\tau$ for several set sizes. At the empirical value
$\tau=0.7621$, the efficiency is already close to the perfect-ranking benchmark, so the
loss due to imperfect ranking is modest. The right panel shows
$\mathrm{RE}(\hthet_{\mathrm{DUST\text{-}MNS}},\tthet_{\mathrm{DUST\text{-}SRS}})$ as a function of
set size $k$ at the empirical exceedance rate. Over the range considered, the theoretical
advantage of DUST-MNS remains substantial, although in practice this must be balanced
against the increased ranking difficulty associated with larger $k$.

\begin{figure}[htbp]
\centering
\safeincludegraphics[width=0.48\textwidth]{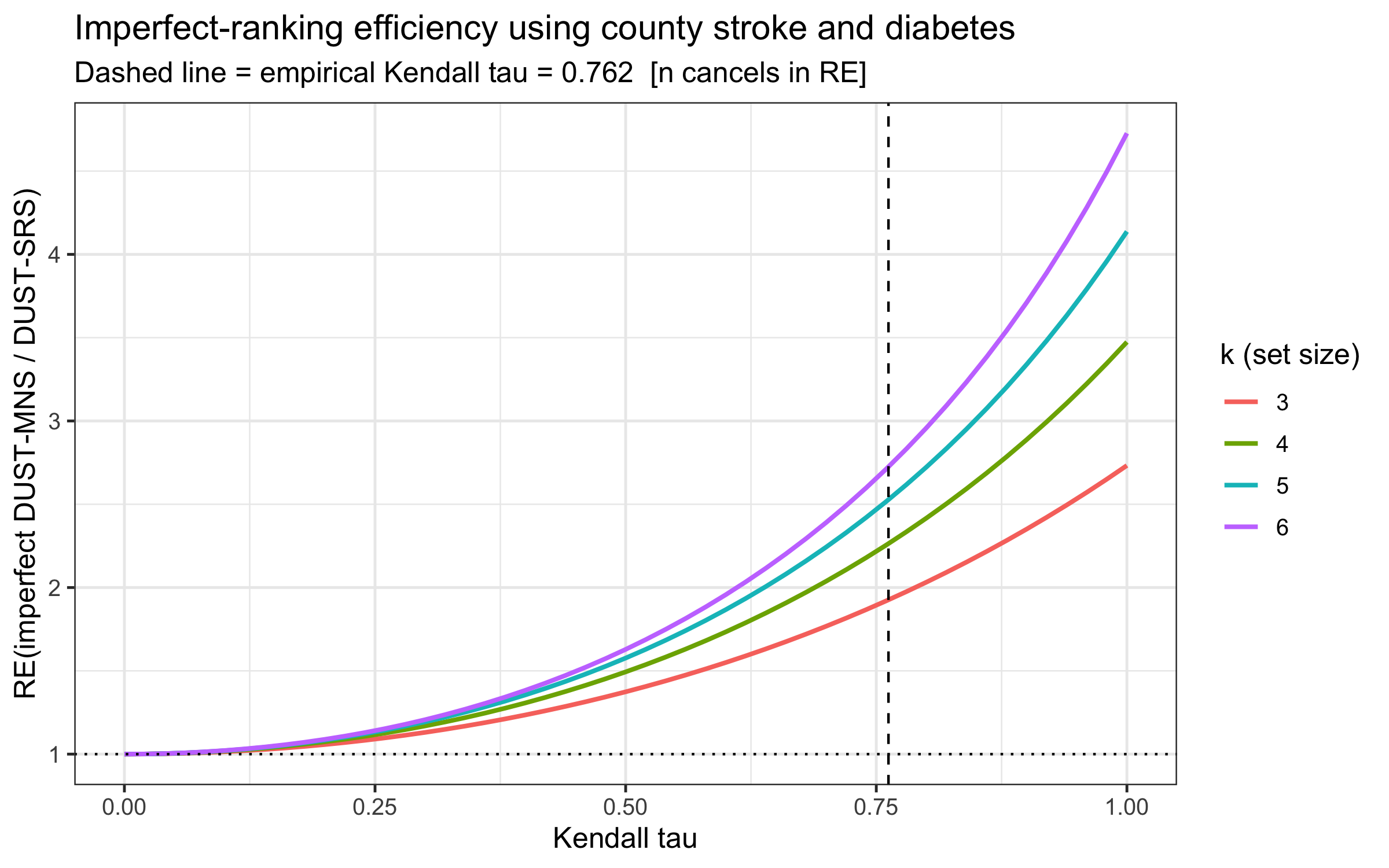}\hfill
\safeincludegraphics[width=0.48\textwidth, height=2.02in]{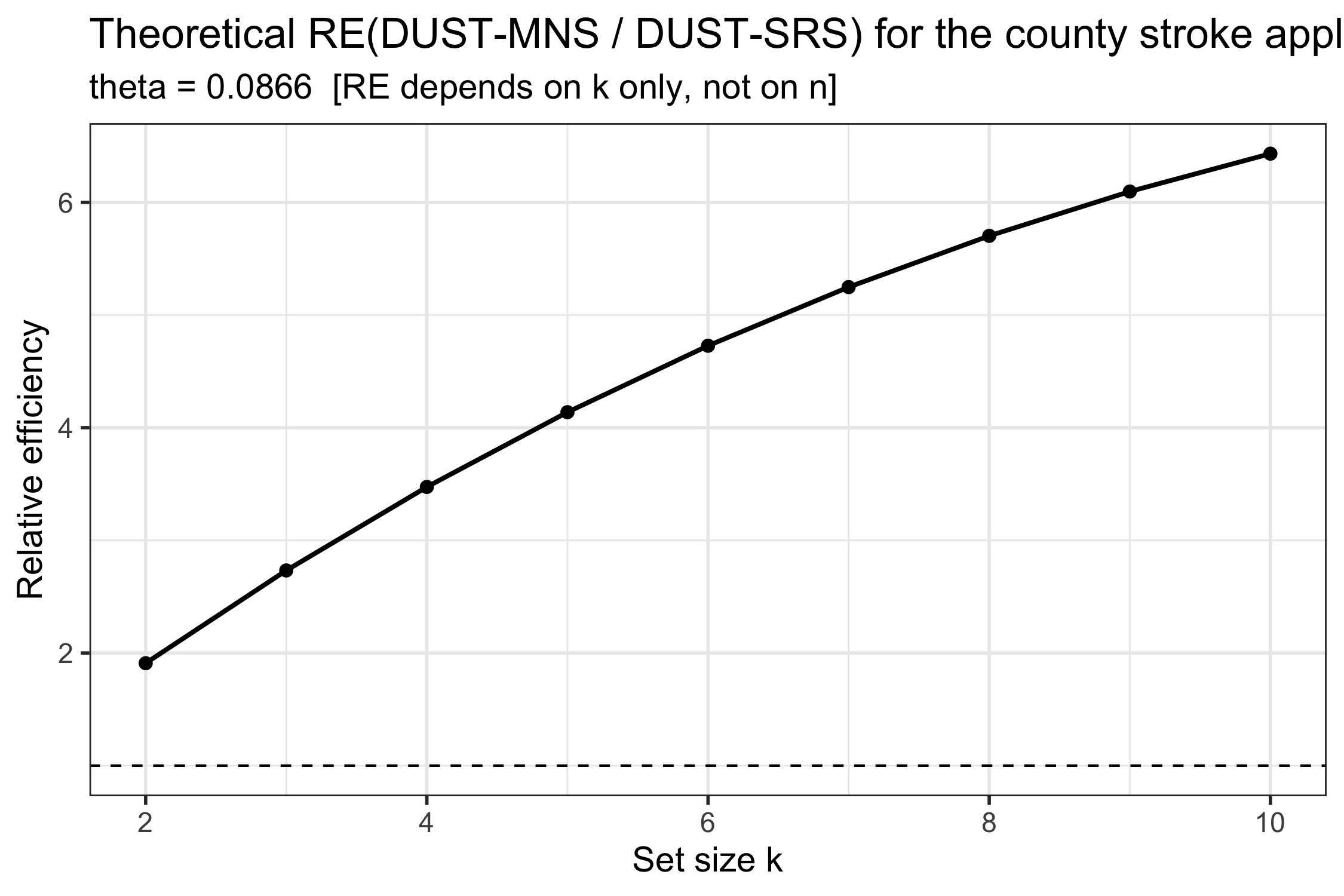}
\caption{Theoretical efficiency at the empirical parameter values. Left: working-model
relative efficiency of imperfect DUST-MNS relative to DUST-SRS as a function of Kendall's
$\tau$ for selected set sizes; the dashed vertical line marks the empirical
$\tau=0.7621$. Right: $\mathrm{RE}(\hthet_{\mathrm{DUST\text{-}MNS}},\tthet_{\mathrm{DUST\text{-}SRS}})$
as a function of set size $k$ at the empirical exceedance rate $\hat\theta=0.0866$.}
\label{fig:re_panel}
\end{figure}

\subsection{Monte Carlo performance evaluation}\label{ssec:mc}

To assess finite-sample performance under the actual county data structure, we conducted a
Monte Carlo study using the observed county-level stroke and diabetes prevalences as a
fixed finite population. In other words, the county frame is held fixed throughout the
experiment and repeated samples are drawn from that empirical population; the Monte Carlo
variation therefore reflects repeated implementation of the competing sampling designs, not
repeated regeneration of a new superpopulation dataset. This complements the model-based
theory by showing how the procedures perform on the observed national county frame.

The simulation grid varies four design inputs: the number of sampled and studied counties
$n\in\{10,20\}$, the set size $k\in\{2,3,4,5\}$, the within-county measurement fraction
$m\in\{0.25\%,0.50\%\}$, and the DUST autocorrelation tuning parameter
$\eta_0\in\{0.15,0.30\}$. The values $n=10$ and $n=20$ represent modest surveillance
budgets; the set sizes are chosen to span the practically relevant range in which ranking
is still credible; and the two $\eta_0$ values are used as design-tuning scenarios rather
than as literal replacements for Moran's $I$. The observed Moran's $I=0.6141$ confirms
substantial positive spatial dependence in the county frame, while $\eta_0=0.15$ and
$0.30$ provide moderate and stronger repulsion settings for the DUST algorithm. For each
configuration we generated $B=5000$ independent Monte Carlo replicates. Within a replicate, the procedures were implemented as follows.
\begin{enumerate}
    \item For SRS, we selected $n$ counties uniformly without replacement from the $N=2921$
    counties.
    \item For DUST-SRS, we selected $n$ counties without replacement using pps-DUST with PPS
    size measure $M_i$ and tuning parameter $\eta_0$.
    \item For DUST-MNS, we first selected $nk$ candidate counties without replacement using
    pps-DUST, then randomly repartitioned those $nk$ counties into $n$ sets of size $k$ in
    every replicate.
    \item Under perfect ranking, the counties within each set were ranked by the true county
    stroke prevalence $p_i$; under imperfect ranking, they were ranked by county diabetes
    prevalence $a_i$.
    \item From each set, only the maximum-ranked county was retained for full measurement,
    yielding $n$ measured counties for each DUST-MNS replicate.
    \item In every sampled county $i$, individual binary outcomes were generated as
    $X_i\sim\mathrm{Binomial}(m_i^{\ast},p_i)$, where
    $m_i^{\ast}=\max\{1,\lfloor f_m N_i\rfloor\}$ and
    $f_m\in\{0.0025,0.0050\}$ is the within-county measurement fraction. The empirical
    exceedance indicator was then formed from the sampled county prevalence
    $X_i/m_i^{\ast}$ relative to the fixed threshold $c=4.1\%$.
    \item For each design, the resulting estimator of the county exceedance proportion was
    compared with the finite-population target
    $\theta_N=N^{-1}\sum_{i=1}^N I(p_i>c)=0.0866$.
\end{enumerate}

Thus, within-county sampling error and between-county design randomness are both present in
the reported Monte Carlo MSE values. Repartitioning the DUST-selected candidate pool in
each replicate prevents the results from being driven by one fixed set construction, while
sampling Bernoulli outcomes within counties reflects the practical fact that county-level
prevalence would typically be estimated from a finite number of measured individuals rather
than observed without error.

Table~\ref{tab:county_mc_eta_combined} reports Monte Carlo mean squared error values
($\times 10^4$) for the county stroke application under the two nonzero DUST tuning
settings, $\eta_0=0.15$ and $\eta_0=0.30$. Results are shown for sample sizes
$n\in\{10,20\}$, set sizes $k\in\{2,3,4,5\}$, within-county measurement fractions
$f_m\in\{0.25\%,0.50\%\}$, and the four competing designs: SRS, DUST-SRS, perfect
DUST-MNS, and imperfect DUST-MNS.

\begin{table}[htbp]
\centering
\footnotesize
\setlength{\tabcolsep}{3.6pt}
\caption{Monte Carlo MSE values ($\times 10^4$) for the county stroke application under the two nonzero DUST autocorrelation settings, $\eta_0=0.15$ and $\eta_0=0.30$. For the perfect and imperfect DUST-MNS estimators, entries are reported as \mbox{MSE (RE)}, where $\mathrm{RE}=\mathrm{MSE}(\text{DUST-SRS})/\mathrm{MSE}(\text{DUST-MNS})$. Thus, values greater than 1 indicate improvement relative to DUST-SRS. Results are shown separately for $f_m=0.25\%$ and $f_m=0.50\%$.}
\label{tab:county_mc_eta_combined}
\begin{tabular}{cc|cccc|cccc}
\toprule
& &
\multicolumn{4}{c|}{$\eta_0=0.15$} &
\multicolumn{4}{c}{$\eta_0=0.30$} \\
\cmidrule(lr){3-6} \cmidrule(lr){7-10}
$n$ & $k$
& SRS
& \makecell{DUST-\\SRS}
& \makecell{Perfect\\DUST-MNS}
& \makecell{Imperfect\\DUST-MNS}
& SRS
& \makecell{DUST-\\SRS}
& \makecell{Perfect\\DUST-MNS}
& \makecell{Imperfect\\DUST-MNS} \\
\midrule
\multicolumn{10}{c}{$f_m = 0.25\%$} \\
\midrule
10 & 2 & 468.55 & 78.63 & 37.40 (2.10) & 36.61 (2.15) & 498.26 & 82.29 & 39.39 (2.09) & 37.07 (2.22) \\
10 & 3 & 488.30 & 82.83 & 27.30 (3.03) & 28.64 (2.89) & 473.95 & 86.69 & 27.28 (3.18) & 27.36 (3.17) \\
10 & 4 & 481.69 & 79.12 & 22.75 (3.48) & 26.37 (3.00) & 481.20 & 89.10 & 23.75 (3.75) & 26.33 (3.38) \\
10 & 5 & 491.14 & 83.35 & 21.86 (3.81) & 26.86 (3.10) & 485.04 & 82.74 & 21.94 (3.77) & 25.54 (3.24) \\
20 & 2 & 379.10 & 40.36 & 19.51 (2.07) & 19.28 (2.09) & 393.47 & 43.82 & 19.74 (2.22) & 18.92 (2.32) \\
20 & 3 & 388.14 & 42.08 & 15.24 (2.76) & 16.68 (2.52) & 402.87 & 41.55 & 13.99 (2.97) & 16.34 (2.54) \\
20 & 4 & 394.88 & 41.37 & 14.09 (2.94) & 17.89 (2.31) & 394.04 & 44.06 & 12.85 (3.43) & 17.54 (2.51) \\
20 & 5 & 375.85 & 43.10 & 14.03 (3.07) & 20.09 (2.15) & 403.01 & 44.91 & 13.49 (3.33) & 18.67 (2.41) \\
\midrule
\multicolumn{10}{c}{$f_m = 0.50\%$} \\
\midrule
10 & 2 & 430.87 & 67.23 & 37.01 (1.82) & 37.04 (1.82) & 433.49 & 68.09 & 34.53 (1.97) & 37.14 (1.83) \\
10 & 3 & 432.58 & 69.47 & 28.92 (2.40) & 30.51 (2.28) & 432.06 & 72.62 & 29.09 (2.50) & 31.09 (2.34) \\
10 & 4 & 437.20 & 71.67 & 25.83 (2.77) & 29.72 (2.41) & 441.95 & 68.28 & 25.41 (2.69) & 30.01 (2.28) \\
10 & 5 & 426.89 & 67.48 & 24.40 (2.77) & 29.33 (2.30) & 428.37 & 70.40 & 23.37 (3.01) & 29.76 (2.37) \\
20 & 2 & 349.44 & 32.48 & 20.58 (1.58) & 22.25 (1.46) & 335.26 & 34.98 & 19.90 (1.76) & 22.31 (1.57) \\
20 & 3 & 342.61 & 36.74 & 17.53 (2.10) & 21.72 (1.69) & 346.34 & 33.90 & 17.47 (1.94) & 20.71 (1.64) \\
20 & 4 & 343.49 & 35.13 & 17.79 (1.97) & 22.69 (1.55) & 336.05 & 34.51 & 16.88 (2.04) & 21.50 (1.61) \\
20 & 5 & 327.56 & 34.97 & 18.06 (1.94) & 24.16 (1.45) & 343.73 & 36.74 & 16.52 (2.22) & 22.95 (1.60) \\
\bottomrule
\end{tabular}
\end{table}

Figure~\ref{fig:mc_mse_bars} summarizes the same Monte Carlo results graphically. Across the simulation grid, the qualitative pattern is stable: SRS has the largest MSE, DUST-SRS improves substantially on SRS, and both versions of DUST-MNS improve further on DUST-SRS. The gap between imperfect and perfect DUST-MNS remains small, reflecting the strong empirical association between stroke prevalence and the auxiliary ranking variable. The strongest gains occur for moderate set sizes $k=3,4,5$, where tail targeting is more aggressive but ranking by diabetes prevalence still remains reliable in the county data.

It is also worth noting that, for fixed set size $k$, the relative efficiency of DUST-MNS with respect to DUST-SRS need not increase with  $n$. At the leading-order variance level, both $\tthet_{\mathrm{DUST\mbox{-}SRS}}$ and $\hthet_{\mathrm{DUST\mbox{-}MNS}}$ decrease at rate $1/n$, so their asymptotic efficiency ratio does not depend on $n$. Indeed,  from \eqref{eq:Delta}
\[
\mathrm{RE}(\hthet_{\mathrm{DUST\mbox{-}MNS}},\tthet_{\mathrm{DUST\mbox{-}SRS}})
\approx
\frac{k^2\theta(1-\theta)^{k-1}}{1-(1-\theta)^k},
\]
which is free of $n$. Thus, increasing $n$ primarily improves absolute precision rather than the relative advantage of one design over the other. The simulation table, however, reports finite-sample relative efficiencies based on mean squared error rather than asymptotic variance alone. Writing
\[
\mathrm{RE}_n(\hthet_{\mathrm{DUST\mbox{-}MNS}},\tthet_{\mathrm{DUST\mbox{-}SRS}})
=
\frac{\mathrm{MSE}(\tthet_{\mathrm{DUST\mbox{-}SRS}})}{\mathrm{MSE}(\hthet_{\mathrm{DUST\mbox{-}MNS}})},
\]
suppose that, for fixed $k$,
\[
\mathrm{MSE}(\tthet_{\mathrm{DUST\mbox{-}SRS}})
=
\frac{A}{n}+\frac{a}{n^2}+o(n^{-2}),
\qquad
\mathrm{MSE}(\hthet_{\mathrm{DUST\mbox{-}MNS}})
=
\frac{B}{n}+\frac{b}{n^2}+o(n^{-2}),
\]
where $A$ and $B$ are the leading variance constants and the $n^{-2}$ terms collect lower-order contributions, including squared bias. Then
\[
\mathrm{RE}_n(\hthet_{\mathrm{DUST\mbox{-}MNS}},\tthet_{\mathrm{DUST\mbox{-}SRS}})
=
\frac{A/n+a/n^2+o(n^{-2})}{B/n+b/n^2+o(n^{-2})}
=
\frac{A}{B}
\left[
1+\frac{a/A-b/B}{n}+o(n^{-1})
\right].
\]
Therefore, for fixed $k$, the empirical relative efficiency may increase or decrease slightly with $n$ depending on the lower-order terms, even though its leading asymptotic value is constant in $n$.

\begin{figure}[htbp]
\centering
\safeincludegraphics[width=0.99\textwidth]{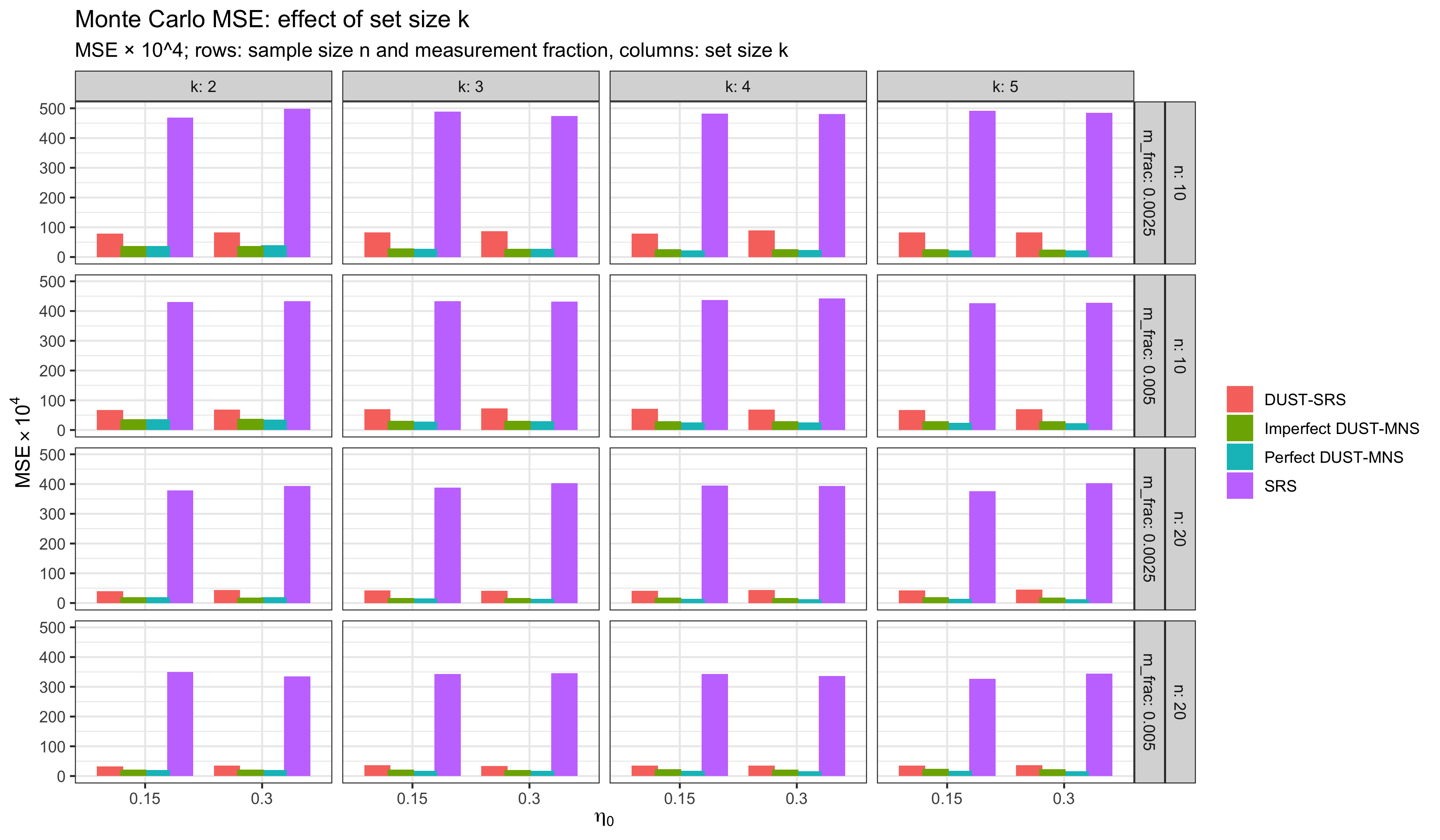}
\caption{Comparison of Monte Carlo MSE values ($\times 10^4$) for the county stroke application under the two nonzero DUST autocorrelation settings, $\eta_0=0.15$ and $\eta_0=0.30$, two sample sizes $n\in \{10, 20\}$ and set sizes $k\in\{2, 3, 4, 5\}$  (for the MNS design)  for the county stroke
application. Within each panel, bars compare DUST-SRS, imperfect DUST-MNS, perfect
DUST-MNS, and SRS.}
\label{fig:mc_mse_bars}
\end{figure}

Overall, the empirical evidence supports the theoretical message of the paper. For
county-level stroke surveillance, DUST-MNS combines the benefit of spatial spreading with
the benefit of upper-tail targeting. The application also shows that, when a strong
auxiliary ranking variable is available, the imperfect-ranking version of the design can
perform nearly as well as the perfect-ranking benchmark.

%======================================================================
\section{Conclusion}\label{sec:conclusion}
%======================================================================

We have introduced a maxima-nominated sampling framework for estimating the threshold
exceedance proportion $\theta=\Pr(p_i>c)$ in spatially correlated areal data. The central
idea is to combine two design principles: spatial spreading, achieved through pps-DUST,
and upper-tail targeting, achieved through maxima nomination within small candidate sets.
Together, these features yield a sampling design tailored to threshold-based inference in
settings where nearby areas are correlated and only a minority of areas exceed the policy
benchmark. The resulting estimator has a simple closed-form.  Under the working model,
DUST-SRS improves on ordinary SRS by reducing the inflation due to spatial dependence,
while DUST-MNS can further improve on DUST-SRS in the moderate-to-rare exceedance regime.
The critical threshold $\theta^\star(k)$ depends on the set size $k$ and determines when
this additional efficiency gain is achieved.

Our asymptotic analysis is carried out in the practically relevant regime in which the set
size $k$ is fixed and the number of sampled sets $n$ increases. In that setting, the
calibrated MNS estimator admits a delta-method variance approximation and a simple
bias-corrected version. The theory also clarifies the tradeoff inherent in the choice of
$k$: increasing $k$ strengthens the tail-targeting effect of maxima nomination, but it
also increases ranking difficulty and finite-sample bias. This supports the use of small
set sizes, such as $k=3,4,$ or $5$, with precision controlled primarily through $n$.

The imperfect-ranking extension shows how the method can be calibrated when the maximum is
identified using an auxiliary variable rather than the unobservable $p_i$. In practice, the
quality of this auxiliary ranking is crucial. The county-level stroke application
illustrates that when the auxiliary variable is strongly associated with the study
variable, the imperfect-ranking version of DUST-MNS can remain close to the ideal
benchmark while still being feasible in practice.

Several directions remain open. These include multivariate exceedance problems, adaptive
choice of the threshold $c$, refined models for imperfect ranking, and integration with
spatial regression structures for $\theta$. More broadly, the framework developed here
suggests that rank-based designs can be adapted in a principled way to modern
threshold-focused spatial surveillance problems, where the main inferential target lies in
the upper tail rather than at the mean.

\section*{Acknowledgment}  I would like to  gratefully acknowledge the  partial support from  NSERC.

%%%%%%%%%%%%%%%%%%%%%%%%%%%%%%%%%%%%%%%%%%%%%%%%%%%
	%
	%  Section
	% 
	%%%%%%%%%%%%%%%%%%%%%%%%%%%%%%%%%%%%%%%%%%%%%%%%%
	
	\section*{Data Availability statement} 
The data underlying this article are publicly available from sources cited in Section~\ref{sec:application}. County-level stroke prevalence estimates were obtained from CDC PLACES, and county geographic boundary files were obtained from the U.S. Census Bureau cartographic boundary shapefiles. The analytic data set and code used to reproduce the results, tables, and figures are available from the corresponding author upon request.

\end{document}